\newtheorem{Theorem}{Theorem}[section]
\newtheorem{Lemma}[Theorem]{Lemma}
\newtheorem{Construction}[Theorem]{Construction}
\newtheorem{Remark}[Theorem]{Remark}
\newcommand{\qed}{\hphantom{.}\hfill $\Box$\medbreak}
\begin{document}
\title{New constructions of cyclic subspace codes\footnote{Supported by the National Natural Science Foundation of China under Grant 12271390 (Ji). (Corresponding author: Lijun Ji)}}

\author{\small  Shuhui \ Yu, Lijun\ Ji \\	\small   Department of Mathematics, Soochow University,	Suzhou 215006, China\\	\small E-mail address: yushuhui\_suda@163.com, jilijun@suda.edu.cn}
\date{}
\maketitle
\begin{abstract}
A subspace of a finite field is called a Sidon space if the product of any two of its nonzero elements is unique up to a scalar multiplier from the base field. Sidon spaces, introduced by  Roth et al. (IEEE Trans Inf Theory 64(6): 4412-4422, 2018), have a close connection with optimal full-length orbit codes. In this paper, we present two constructions of Sidon spaces. The union of Sidon spaces from the first construction yields cyclic subspace codes in $\mathcal{G}_{q}(n,k)$ with minimum distance $2k-2$ and size $r(\lceil \frac{n}{2rk} \rceil -1)((q^{k}-1)^{r}(q^{n}-1)+\frac{(q^{k}-1)^{r-1}(q^{n}-1)}{q-1})$, where $k|n$, $r\geq 2$ and $n\geq (2r+1)k$, $\mathcal{G}_{q}(n,k)$ is the set of all $k$-dimensional subspaces of $\mathbb{F}_{q}^{n}$. The union of Sidon spaces from the second construction gives cyclic subspace codes in $\mathcal{G}_{q}(n,k)$ with minimum distance $2k-2$ and size $\lfloor \frac{(r-1)(q^{k}-2)(q^{k}-1)^{r-1}(q^{n}-1)}{2}\rfloor$ where $n= 2rk$ and $r\geq 2$. Our cyclic subspace codes have larger sizes than those in the literature, in particular, in the case of $n=4k$, the size of our resulting code is within a factor of $\frac{1}{2}+o_{k}(1)$ of the sphere-packing bound as $k$ goes to infinity. 

\medskip\noindent \textbf{Keywords}: Cyclic subspace code,  Sidon space,  Network coding\smallskip
\end{abstract}

\section{Introduction}
Let $q$ be a prime power, and $\mathbb{F}_{q}$ the finite field of size $q$. Let $\mathbb{F}_{q^{n}}$ be the extension field of degree $n$ over $\mathbb{F}_{q}$, which is also a vector space of dimension $n$ over $\mathbb{F}_{q}$.  The projective space of order $n$ over $\mathbb{F}_{q}$, denoted by $\mathcal{P}_{q}(n)$, is the set of all $\mathbb{F}_{q}$-subspaces of $\mathbb{F}_{q^{n}}$. For nonnegative integers $k\leq n$, we denote by $\mathcal{G}_{q}(n,k)$ the set of all $k$-dimensional $\mathbb{F}_{q}$-subspaces of $\mathbb{F}_{q^{n}}$. For any $U,V\in \mathcal{P}_{q}(n)$, the subspace distance of $U, V$  is defined as: $$d(U,V)=\dim(U)+\dim(V)-2\dim(U\cap V),$$ where $\dim(\cdot)$ denotes the dimension of a vector space over $\mathbb{F}_{q}$. A nonempty subset $\mathcal{C}$ of $\mathcal{P}_{q}(n)$ with the subspace distance is called a subspace code. The minimum subspace distance of a subspace code $\mathcal{C}$ is defined as $d(\mathcal{C})=\min\{d(U,V)\colon U, V\in \mathcal{C}, U\neq V\}.$ In particular, a subspace code $\mathcal{C}$ is termed a constant dimension code (CDC) if all the elements of $\mathcal{C}$ have the same dimension $k$. For convenience, a CDC in $\mathcal{G}_{q}(n,k)$ with minimum subspace distance $d$ is denoted by $(n,d,k)_q$-CDC. 

Subspace codes have  attracted much attension in recent decades due to their applications in random network coding, cf. \cite{ACLY2000,ES2009,EV2011,KK2008,KoK2008}. A central problem of subspace codes is to construct CDCs with large minimum distance and as many codewords as possible. 

 We recall the sphere-packing bound for CDCs. 

\begin{Theorem}[Sphere-Packing Bound \cite{RRT2017}]\label{UB}
Let $\mathcal{C}\subseteq \mathcal{G}_{q}(n,k)$ be a CDC with minimum distance $d$. Then $$|\mathcal{C}|\leq  \frac{{\small\left[ \begin{array}{ccccc}  n\\ k-\frac{d}{2}+1 \end{array}\right]_{q}}}{{\small \left[ \begin{array}{ccccc}  k\\ k-\frac{d}{2}+1 \end{array}\right]_{q}}},$$ where ${\small \left[ \begin{array}{ccccc} t\\ s \end{array}\right]_{q}}=\prod\limits_{i=0}^{s-1}\frac{q^{t-i}-1}{q^{i+1}-1}.$
\end{Theorem}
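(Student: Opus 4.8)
The plan is to establish the inequality by a sphere-packing (double-counting) argument that counts the small-dimensional subspaces contained in the codewords. Since every codeword has dimension $k$, for any two codewords $U,V\in\mathcal{C}$ we have $d(U,V)=2k-2\dim(U\cap V)$; in particular $d$ is necessarily even, and the minimum-distance hypothesis is equivalent to $\dim(U\cap V)\le k-\frac{d}{2}$ for all distinct $U,V\in\mathcal{C}$. I would set $t=k-\frac{d}{2}+1$ and first check that $1\le t\le k\le n$ (using $2\le d\le 2k$), so that all the Gaussian binomial coefficients appearing in the statement are well defined and nonzero.

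The key observation is that no $t$-dimensional $\mathbb{F}_q$-subspace is contained in two distinct codewords. Indeed, if a subspace $W$ with $\dim W=t$ satisfied $W\subseteq U$ and $W\subseteq V$ for distinct $U,V\in\mathcal{C}$, then $\dim(U\cap V)\ge t=k-\frac{d}{2}+1$, whence $d(U,V)=2k-2\dim(U\cap V)\le d-2<d$, contradicting the minimum distance. Consequently, if we record all incident pairs $(W,U)$ with $\dim W=t$, $U\in\mathcal{C}$, and $W\subseteq U$, then $W$ determines $U$ uniquely, i.e. the projection onto the first coordinate is injective.

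I would then count these incidences two ways. Each codeword $U$, being a $k$-dimensional space, contains exactly $\left[\begin{array}{c} k\\ t\end{array}\right]_q$ subspaces of dimension $t$, so the total number of incident pairs equals $|\mathcal{C}|\,\left[\begin{array}{c} k\\ t\end{array}\right]_q$. On the other hand, by the previous paragraph each of the $\left[\begin{array}{c} n\\ t\end{array}\right]_q$ subspaces of dimension $t$ in $\mathbb{F}_q^{n}$ occurs in at most one such pair. Comparing the two counts yields $|\mathcal{C}|\,\left[\begin{array}{c} k\\ t\end{array}\right]_q\le\left[\begin{array}{c} n\\ t\end{array}\right]_q$, and dividing through by $\left[\begin{array}{c} k\\ t\end{array}\right]_q$ gives precisely the stated bound.

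The argument is essentially a counting identity once the incidence structure is set up, so I do not expect a serious obstacle; the only points requiring care are the parity of $d$ (automatic in the constant-dimension setting) and confirming that $t$ lies in the admissible range $0\le t\le\min\{k,n\}$ so that the Gaussian binomials are meaningful. A cleaner way to phrase the same idea is to view each codeword as the ``ball'' of $t$-dimensional subspaces it contains inside $\mathcal{G}_{q}(n,t)$, and to observe that the minimum-distance condition forces these balls to be pairwise disjoint; the bound is then just the statement that disjoint equal-size balls cannot overfill $\mathcal{G}_{q}(n,t)$.
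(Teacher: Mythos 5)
Your double-counting argument is correct: setting $t=k-\frac{d}{2}+1$, the minimum-distance condition forces distinct codewords to share no $t$-dimensional subspace, and comparing the count of incident pairs $(W,U)$ with $\dim W=t$, $W\subseteq U\in\mathcal{C}$ against the total number $\left[\begin{array}{c} n\\ t\end{array}\right]_{q}$ of $t$-dimensional subspaces of $\mathbb{F}_{q}^{n}$ yields exactly the stated bound. Note that the paper itself offers no proof of this theorem---it is quoted from the cited reference---so there is nothing to compare against; your argument is the standard packing proof found in that literature, and the only care points you flag (evenness of $d$ and the admissible range of $t$) are handled correctly.
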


For a subspace $U\in \mathcal{G}_{q}(n,k)$ and $\alpha\in \mathbb{F}_{q^{n}}^{*}=\mathbb{F}_{q^{n}}\setminus \{0\}$, the cyclic shift of $U$ by $\alpha$ is $\alpha U=\{\alpha u\colon u\in U\}$. The orbit of $U$ is $orb(U)=\{\alpha U\colon \alpha \in \mathbb{F}_{q^{n}}^{*}\}$, each $\alpha U\in orb(U)$ is a subspace of dimension $k$. Clearly, $|orb(U)|=\frac{q^{n}-1}{q^{t}-1}$ for some $t | n$. If $|orb(U)|=\frac{q^{n}-1}{q-1}$, it is a full-length  orbit. If $d(orb(U))=2k-2$, then this full-length  orbit code $orb(U)$ is called optimal.
A subspace code $\mathcal{C}$ is called cyclic if $\alpha U\in \mathcal{C}$ for all $\alpha\in \mathbb{F}_{q^{n}}^{*}$ and $U\in \mathcal{C}$. A lot of efforts have been devoted to constructions of large cyclic subspace codes, cf\cite{BH2019,ETGR2016,RRT2017,NYW2020,FW2021,HC2021,YH2021,NXG2022, GMT2014,AFMJ2013,HC2022,TG2022,OO2017,WX2019,ZT2023}.  Cyclic CDCs are very useful with efficient encoding and decoding algorithms, cf. \cite{BEOV2016, GPB2010, KoK2008, EV2011}. There has been extensive reserch on constructions of $(n,2k-2,k)_q$-CDCs by using linearized polynomials, cf.\cite{BH2019,ETGR2016,OO2017,WX2019}. They used the set of roots of the linearized polynomial (with few monomials) to denote the subspace $\mathcal{U}$, and showed that $\{\alpha \mathcal{U}\colon \alpha \in \mathbb{F}_{q^{n}}^{*}\}$ is an $(n,2k-2,k)_q$-CDC for any given $k$ and infinitely many values of $n$. They also provided a method to increase the number of distinct cyclic subspace code without decreasing the minimum subspace distance.  Such constructions give cyclic subspace codes with large sizes, but it is hard to give an estimate of the largest possible sizes due to complicated conditions. 

We denote $a\cdot \mathbb{F}_{q}=\{\lambda a\colon \lambda \in \mathbb{F}_{q}\}$. A Sidon space is a subspace $U\in \mathcal{G}_{q}(n,k)$ such that for all nonzero elements $a$, $b$, $c$, $d\in U$, if $ab=cd$ then $\{a\mathbb{F}_{q}, b\mathbb{F}_{q}\}=\{c\mathbb{F}_{q}, d\mathbb{F}_{q}\}.$ In \cite{RRT2017}, Roth et al. showed that Sidon spaces have a close connection with optimal full-length orbit codes.

\begin{Lemma}[\cite{RRT2017}]\label{sac}
Let the subspace $U\in \mathcal{G}_{q}(n,k)$. The set $\mathcal{C}=\{\alpha U\colon \alpha\in \mathbb{F}_{q^{n}}^{*}\}$ is an optimal full-length orbit code if and only if $U$ is a Sidon space.
\end{Lemma}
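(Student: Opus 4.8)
The plan is to unpack the two adjectives in ``optimal full-length orbit code'' and handle them together. Writing $Stab(U)=\{\alpha\in\mathbb{F}_{q^n}^*:\alpha U=U\}$, I would first observe that $Stab(U)\cup\{0\}$ is closed under addition and multiplication and contains $\mathbb{F}_q$, hence is a subfield $\mathbb{F}_{q^t}$ of $\mathbb{F}_{q^n}$ with $t\mid n$; consequently $|orb(U)|=(q^n-1)/(q^t-1)$, and \emph{full-length} is equivalent to $Stab(U)=\mathbb{F}_q^*$, i.e.\ $U$ being an $\mathbb{F}_{q^t}$-space only for $t=1$. For the distance, the key reduction is that multiplication by $\beta^{-1}$ is an $\mathbb{F}_q$-isomorphism, so $\dim(\alpha U\cap\beta U)=\dim(\gamma U\cap U)$ with $\gamma=\alpha\beta^{-1}$, and $d(\alpha U,\beta U)=2k-2\dim(\gamma U\cap U)$. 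Thus \emph{optimal} amounts to: $\dim(\gamma U\cap U)\le 1$ whenever $\gamma U\neq U$, with equality attained for at least one such $\gamma$.

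For the direction ``Sidon $\Rightarrow$ optimal full-length'', I would first bound the intersection. Assume $\gamma U\neq U$ but $\dim(\gamma U\cap U)\ge 2$, and pick $\mathbb{F}_q$-independent $w_1,w_2\in\gamma U\cap U$; writing $w_i=\gamma a_i$ with $a_i\in U$ gives the crucial product identity $w_1a_2=\gamma a_1a_2=w_2a_1$ among four nonzero elements of $U$. The Sidon property forces $\{w_1\mathbb{F}_q,a_2\mathbb{F}_q\}=\{w_2\mathbb{F}_q,a_1\mathbb{F}_q\}$, and I would rule out both matchings: $w_1\mathbb{F}_q=w_2\mathbb{F}_q$ contradicts the independence of $w_1,w_2$, while $w_1\mathbb{F}_q=a_1\mathbb{F}_q$ combined with $w_1=\gamma a_1$ forces $\gamma\in\mathbb{F}_q^*$ and hence $\gamma U=U$, again a contradiction. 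This yields $\dim(\gamma U\cap U)\le 1$, i.e.\ $d(orb(U))\ge 2k-2$. Next, full-length follows by contradiction: were $U$ an $\mathbb{F}_{q^t}$-space with $t\ge 2$, then choosing $u\in U\setminus\{0\}$ and $\gamma\in\mathbb{F}_{q^t}\setminus\mathbb{F}_q$ would put $u,\gamma u,\gamma^2u$ in $U$ with $u\cdot(\gamma^2u)=(\gamma u)\cdot(\gamma u)$ violating the Sidon condition. Finally, for $k\ge 2$ the value $2k-2$ is attained: taking $\mathbb{F}_q$-independent $u,w\in U$ and $\gamma=wu^{-1}$ gives $0\neq w\in\gamma U\cap U$, while $\gamma\notin\mathbb{F}_q^*$ together with full-length gives $\gamma U\neq U$.

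For the converse I would argue by contrapositive. If $U$ is not Sidon, fix nonzero $a,b,c,d\in U$ with $ab=cd$ but $\{a\mathbb{F}_q,b\mathbb{F}_q\}\neq\{c\mathbb{F}_q,d\mathbb{F}_q\}$, and set $\gamma=ac^{-1}$, so that $a=\gamma c$ and, cancelling $c$ in $ab=cd$, $d=\gamma b$; hence $a,d\in\gamma U\cap U$ are nonzero. A short case analysis finishes: if $\gamma\in\mathbb{F}_q^*$ then $a\mathbb{F}_q=c\mathbb{F}_q$ and $d\mathbb{F}_q=b\mathbb{F}_q$, contradicting the non-Sidon choice; if $\gamma\notin\mathbb{F}_q$ and $\gamma U=U$ then $\gamma\in Stab(U)\setminus\mathbb{F}_q^*$ and the orbit is not full-length; and if $\gamma\notin\mathbb{F}_q$ with $\gamma U\neq U$, then $a,d$ must be independent (dependence would again force $\{a\mathbb{F}_q,b\mathbb{F}_q\}=\{c\mathbb{F}_q,d\mathbb{F}_q\}$), so $\dim(\gamma U\cap U)\ge 2$ and $d(orb(U))\le 2k-4<2k-2$. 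In every case $orb(U)$ fails to be an optimal full-length orbit code.

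The main obstacle I anticipate is the bookkeeping in the two case analyses driven by the unordered-pair condition $\{\cdot,\cdot\}=\{\cdot,\cdot\}$: each Sidon (non-)equality splits into two matchings, and one must consistently feed the relations $w_i=\gamma a_i$ (resp.\ $a=\gamma c$, $d=\gamma b$) back in to convert a ``wrong'' matching either into linear dependence of supposedly independent vectors or into $\gamma\in\mathbb{F}_q$. Keeping the scalar $\gamma$ and the $\mathbb{F}_q$-classes straight---rather than any deep algebra---is where the care lies; everything else is the standard translation between intersection dimensions and subspace distance.
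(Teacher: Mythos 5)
The paper does not prove this lemma at all---it is quoted from Roth--Raviv--Tamo \cite{RRT2017} as a known result---so there is no in-paper argument to compare against. Your proof is correct and is essentially the standard argument from that reference: reducing the distance condition to $\dim(\gamma U\cap U)\le 1$ for $\gamma U\neq U$, extracting the product identity $w_1a_2=w_2a_1$ from a two-dimensional intersection, and running the same dictionary in reverse for the converse; the only caveat, which you already flag, is that attainment of distance exactly $2k-2$ needs $k\ge 2$ (the statement is vacuously problematic for $k=1$, an edge case the paper implicitly excludes).
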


In the paper \cite{RRT2017},  Roth et al. provided a construction of $(2k,2k-2,k)_q$-CDCs by taking the union of optimal full-length orbit codes arising from Sidon spaces. Such codes have sizes differ from the sphere-packing bound by a factor of $\frac{1}{2}$ asymptotically as $k$ goes to infinity.

\begin{Lemma}[\cite{RRT2017}]\label{dist}
Let $U$ and $V$ be two distinct elements in $\mathcal{G}_{q}(n,k)$. Then the following are equivalent.

$(1)$ $\dim(U\cap \alpha V)\leq 1$, for any $\alpha \in \mathbb{F}_{q^{n}}^{*}$.

$(2)$ For any nonzero elements $a$, $c\in U$ and nonzero elements $b$, $d\in V$, the equality $ab=cd$ implies that $a\mathbb{F}_{q}=c\mathbb{F}_{q}$ and $b\mathbb{F}_{q}=d\mathbb{F}_{q}$.
\end{Lemma}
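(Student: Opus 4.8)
The plan is to prove the two implications separately, in each case translating between the multiplicative relation $ab=cd$ and the geometry of the intersection $U\cap\alpha V$ by means of a single well-chosen scalar. The bridge is the observation that if $a,c\in U$ and $b,d\in V$ are nonzero with $ab=cd$, then setting $\alpha=ad^{-1}\in\mathbb{F}_{q^{n}}^{*}$ gives $a=\alpha d\in\alpha V$ and $c=abd^{-1}=\alpha b\in\alpha V$; since also $a,c\in U$, both $a$ and $c$ lie in $U\cap\alpha V$. Thus a product coincidence automatically produces two common vectors of $U$ and $\alpha V$, and the whole argument turns on when those two vectors are $\mathbb{F}_{q}$-proportional.

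For $(1)\Rightarrow(2)$, I would assume (1) and take any nonzero $a,c\in U$ and $b,d\in V$ with $ab=cd$. By the bridge, $a,c\in U\cap\alpha V$ for $\alpha=ad^{-1}$, and (1) forces $\dim(U\cap\alpha V)\le 1$, so $a$ and $c$ are $\mathbb{F}_{q}$-linearly dependent, i.e.\ $a\mathbb{F}_{q}=c\mathbb{F}_{q}$. Writing $c=\lambda a$ with $\lambda\in\mathbb{F}_{q}^{*}$ and substituting into $ab=cd=\lambda ad$, I cancel the nonzero factor $a$ to get $b=\lambda d$, whence $b\mathbb{F}_{q}=d\mathbb{F}_{q}$. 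This yields (2).

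For $(2)\Rightarrow(1)$, I would argue by contraposition: suppose $\dim(U\cap\alpha V)\ge 2$ for some $\alpha\in\mathbb{F}_{q^{n}}^{*}$, and choose $\mathbb{F}_{q}$-linearly independent $x,y\in U\cap\alpha V$. Then $x=\alpha b$ and $y=\alpha d$ for some nonzero $b,d\in V$, while $x,y\in U$. Putting $a=x$ and $c=y$, both products $ad$ and $cb$ equal $\alpha bd$, so $ad=cb$ is a relation of the type in (2) with $a,c\in U$ and $d,b\in V$ nonzero; yet $a\mathbb{F}_{q}=x\mathbb{F}_{q}\neq y\mathbb{F}_{q}=c\mathbb{F}_{q}$ because $x,y$ are independent. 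This contradicts (2), completing the contrapositive and hence the equivalence.

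I expect no serious obstacle: once the substitution $\alpha=ad^{-1}$ (equivalently $\alpha=cb^{-1}$) is written down, each direction reduces to a one-line cancellation. The only points requiring care are bookkeeping ones, namely checking that every element being inverted or cancelled is nonzero so the divisions are legitimate, and keeping straight that equality of the one-dimensional $\mathbb{F}_{q}$-rays $a\mathbb{F}_{q}=c\mathbb{F}_{q}$ is exactly $\mathbb{F}_{q}$-linear dependence of $a$ and $c$, so that ``$\dim(U\cap\alpha V)\le 1$'' and ``$a\mathbb{F}_{q}=c\mathbb{F}_{q}$'' are genuinely the same condition on the pair $a,c$.
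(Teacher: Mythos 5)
Your proof is correct: both implications go through, the scalar $\alpha=ad^{-1}$ is legitimately nonzero, and the contrapositive direction correctly produces the relation $ad=cb$ from two independent vectors of $U\cap\alpha V$. The paper itself states this lemma without proof, citing \cite{RRT2017}, and your argument is essentially the standard one given there, so there is nothing to flag.
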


Niu et al. \cite{NYW2020} gave several Sidon spaces with new parameters and provided some constructions of cyclic CDCs. Feng and Wang \cite{FW2021} provided a construction of cyclic $(n,2k-2,k)_q$-CDCs with size $(\lceil \frac{n}{2k} \rceil-1)\frac{q^{k}(q^{n}-1)}{q-1}$, where $n$ is a multiple of $k$ with $n\geq 3k$. Their construction  uses a variant of Sidon spaces in \cite{RRT2017}, and the size differs from the sphere-packing bound by a factor of $\frac{1}{q-1}$ as $k$ goes to infinity. Niu et al. \cite{NXG2022} generalized the constructions of \cite{FW2021}, and provided a construction of cyclic $(n,2k-2,k)_q$-CDCs with size $\ell(\lceil \frac{n}{4k} \rceil-1)\frac{q^{k}(q^{n}-1)}{q-1}$, where $n$ is a multiple of $k$ with $n\geq 5k$,  $\ell \leq k$ is the size of $S:=\{t_{1},t_{2},\ldots,t_{\ell}\}\subseteq \mathbb{N}$ such that $\gcd(t_{r},k)=1$ for all $1\leq r\leq \ell$ and $\gcd(t_{r_{2}}-t_{r_{1}},k)=1$ for all $t_{r_{1}}<t_{r_{2}}\in S$. Recently, Zhang and Tang \cite{ZT2023} generalized the construction in \cite{FW2021}  and  provided a construction of cyclic $(n,2k-2,k)_q$-CDCs with size $(\lceil \frac{n}{2k} \rceil-1)q^{k}(q^{n}-1)$, where $n$ is a multiple of $k$ with $n\geq 3k$.

In this paper, we present two constructions of Sidon spaces. The first one  makes use of  a variant of Sidon spaces constructed in \cite{FW2021} and \cite{ZT2023}, cf. Lemma \ref{NSidon}, while the second uses irreducible polynomials and a variant of Sidon spaces in \cite{FW2021}. By taking the union of optimal full-length orbit codes arising from Sidon spaces, two new classes of cyclic CDCs are obtained, i.e., there is a cyclic $(n,2k-2,k)_q$-CDC of size  $r(\lceil \frac{n}{2rk} \rceil-1)((q^{k}-1)^{r}(q^{n}-1)
+\frac{(q^{k}-1)^{r-1}(q^{n}-1)}{q-1})$ where $k|n$, $n\geq (2r+1)k$ and $r\geq 2$, and a cyclic $(n,2k-2,k)_q$-CDC of of size $\lfloor\frac{(r-1)(q^{k}-2)(q^{k}-1)^{r-1}(q^{n}-1)}{2}\rfloor$ where $n=2rk$ and $r\geq 2$. Consequently, several new lower bounds of cyclic CDCs are obtained, cf. Table I. In particular,  in the case of $n=4k$, the size of our resulting code is within a factor of $\frac{1}{2}+o_{k}(1)$ of the sphere-packing bound as $k$ goes to infinity.

\begin{figure}[htp]\label{Table}
 \begin{center}
 TABLE I: The lower bounds of cyclic subspace codes in $\mathcal{G}_{q}(n,k)$
 \end{center}
 \vskip 2.5mm
{\footnotesize
\begin{center}
  \begin{tabular}
{|c|c|c|c|c|c|}
		\hline
		$n$ & old lower bounds  & Our lower bounds & source\\ \hline
      $n=4k$& $q^{k}\cdot \frac{q^{n}-1}{q-1}$ \cite{FW2021} &$\lfloor\frac{(q^{k}-2)(q^{k}-1)(q^{n}-1)}{2}\rfloor$ & Theorem \ref{Con2} \\
      \hline
       $n=(2r+1)k,r \geq 2$& $rq^{k}(q^{n}-1)$ \cite{ZT2023} &$r(\frac{((q^{k}-1)(q-1)+1)(q^{k}-1)^{r-1}(q^{n}-1)}{q-1})$ & Theorem \ref{Con1} \\	
	\hline
     $n=(2r+1)k,r \geq 2$& $\ell r\frac{q^{k}(q^{n}-1)}{q-1}$, $(\ell<k)$ \cite{NXG2022}&$r(\frac{((q^{k}-1)(q-1)+1)(q^{k}-1)^{r-1}(q^{n}-1)}{q-1})$ & Theorem \ref{Con1}\\
	\hline
\end{tabular}
\end{center}
}
\end{figure}

The remainder of this paper is organized as follows. In Section $2$, we present the first construction for cyclic  $(n,2k-2,k)_q$-CDCs where $k|n$, $n\geq (2r+1)k$ and $r\geq 2$. Section $3$ presents the second construction for cyclic  $(n,2k-2,k)_q$-CDCs where $n=2rk$ and $r\geq 2$. In Section $4$, we conclude this paper.

\section{The first construction}

Let $k,n$ be positive integers such that $n$ be a multiple of $k$ and $n\geq (2r+1)k$, where $r\geq 2$ is a positive integer. Let $\xi$ be a  primitive element in $\mathbb{F}_{q^{k}}$ and $\gamma$ a root of an irreducible polynomial of degree $\frac{n}{k}$ over $\mathbb{F}_{q^{k}}$, which lies in $\mathbb{F}_{q^{n}}$. Clearly $\{1,\gamma,\ldots, \gamma^{\frac{n}{k}-1}\}$ is a linearly independent set over $\mathbb{F}_{q^{k}}$. Let $i_{m}$ $(1\leq m \leq r)$, $b$, $\ell$ and $j$ be integers such that $0\leq i_{m}\leq q^{k}-2$ for $1\leq m \leq r$, $0\leq b \leq q-2$, $1\leq \ell \leq r$ and $1\leq j\leq \lceil \frac{n}{2rk}\rceil-1$. Define subspaces of dimension $k$ as follows:
\begin{equation}\label{Equ1}
U^{\ell}_{i_{1},i_{2},\ldots,i_{r},b,j}:=\{u+(u^{q}-\xi^{b}u)\xi^{i_{\ell}}\gamma^{\ell j}+\sum\limits_{m=1,m\neq \ell}^ru\xi^{i_{m}}\gamma^{mj}\colon u\in \mathbb{F}_{q^{k}}\},
\end{equation}
\begin{equation}\label{Equ2}
V^{\ell}_{i_{1},i_{2},\ldots,i_{r},j}:=\{v+v^{q}\gamma^{\ell j}+\sum\limits_{m=1,m\neq \ell}^rv\xi^{i_{m}}\gamma^{mj}\colon v\in \mathbb{F}_{q^{k}}\}.
\end{equation}

\begin{Construction}\label{Construction}
Let $k$ be a positive integer with $k\geq 2$, $n$ a multiple of $k$ such that $n\geq (2r+1)k$, where $r\geq 2$. Set  $e=\lceil \frac{n}{2rk}\rceil-1$.  For $0\leq i_{m}\leq q^{k}-2$ $(1\leq m \leq r)$, $0\leq b\leq q-2$, $1\leq j\leq e$ and $1\leq \ell \leq r$, define 
\[
\begin{array}{l}
\mathcal{C}^{\ell}_{i_{1},i_{2},\ldots,i_{r},b,j}:= \{\alpha U^{\ell}_{i_{1},i_{2},\ldots,i_{r},b,j}\colon \alpha \in \mathbb{F}_{q^{n}}^{*}\}, \smallskip \\
\mathcal{D}^{\ell}_{i_{1},i_{2},\ldots,i_{r},j}:= \{\alpha V^{\ell}_{i_{1},i_{2},\ldots,i_{r},j}\colon \alpha \in \mathbb{F}_{q^{n}}^{*}\},
\end{array}
\] where $U^{\ell}_{i_{1},i_{2},\ldots,i_{r},b,j}$ and $V^{\ell}_{i_{1},i_{2},\ldots,i_{r},j}$ are defined as in $(\ref{Equ1})$ and $(\ref{Equ2})$. Define
\[
\begin{array}{l}
\mathcal{C}^{\ell}:=\bigcup\limits_{j=1}^{e}\bigcup\limits_{0\leq i_{1},\ldots,i_r\leq q^k-2}\bigcup\limits_{b=0}^{q-2}\mathcal{C}^{\ell}_{i_{1},i_{2},\ldots,i_{r},b,j}, \smallskip \\
\mathcal{D}^{\ell}:=\bigcup\limits_{j=1}^{e}\bigcup\limits_{0\leq i_{1},\ldots,i_r\leq q^k-2}\mathcal{D}^{\ell}_{i_{1},i_{2},\ldots,i_{r},j},\smallskip\\
\mathcal{C}=\bigcup_{\ell=1}^{r}\mathcal{C}^{\ell}\ {\rm and}\ \mathcal{D}=\bigcup_{\ell=1}^{r}\mathcal{D}^{\ell}.
\end{array}
\] 

\end{Construction}

\begin{Theorem}\label{Con1}
Let $\mathcal{C}$ and $\mathcal{D}$ be defined as in Construction $\ref{Construction}$. Then $\mathcal{C}\cup \mathcal{D}$ is a cyclic $(n,2k-2,k)_q$-CDC of size $r(\lceil \frac{n}{2rk} \rceil-1)\left [(q^{k}-1)^{r}(q^{n}-1)+\frac{(q^{k}-1)^{r-1}(q^{n}-1)}{q-1}\right ]$.
\end{Theorem}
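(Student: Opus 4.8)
The plan is to verify the three things bundled into the statement: that each base space generates a well-behaved orbit (full length, internal distance $2k-2$), that distinct orbits are far apart, and that the orbit count matches the size formula. The backbone is Lemma \ref{sac} for the intra-orbit behaviour and Lemma \ref{dist} for the inter-orbit behaviour, with the degree restriction $j\leq e=\lceil \frac{n}{2rk}\rceil-1$ doing the essential arithmetic work.

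First I would record that each $U^{\ell}_{i_1,\ldots,i_r,b,j}$ and each $V^{\ell}_{i_1,\ldots,i_r,j}$ is genuinely $k$-dimensional: both are images of $\mathbb{F}_q$-linear maps that are injective because the coefficient of $\gamma^{0}$ is $u$ (resp.\ $v$) itself. I would then invoke (or prove a mild extension of) the variant Sidon-space construction of \cite{FW2021,ZT2023}, i.e.\ Lemma \ref{NSidon}, to conclude that every such base space is a Sidon space. Granting this, Lemma \ref{sac} immediately makes each $\mathcal{C}^{\ell}_{i_1,\ldots,i_r,b,j}$ and $\mathcal{D}^{\ell}_{i_1,\ldots,i_r,j}$ an optimal full-length orbit code: every orbit has size $\frac{q^{n}-1}{q-1}$ and internal minimum distance $2k-2$.

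Next, and this is the crux, I would show that for two base spaces $W,W'$ arising from different parameter choices one has $\dim(W\cap\alpha W')\leq 1$ for all $\alpha\in\mathbb{F}_{q^{n}}^{*}$. By Lemma \ref{dist} it suffices to prove that $ab=cd$ with nonzero $a,c\in W$ and $b,d\in W'$ forces $a\mathbb{F}_q=c\mathbb{F}_q$ and $b\mathbb{F}_q=d\mathbb{F}_q$. Writing $a,c$ via parameters $u,u'\in\mathbb{F}_{q^{k}}$ and $b,d$ via $w,w'\in\mathbb{F}_{q^{k}}$, I would expand both products in the $\mathbb{F}_{q^{k}}$-basis $\{1,\gamma,\ldots,\gamma^{n/k-1}\}$. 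The decisive structural fact is the degree bound: every exponent of $\gamma$ occurring in $ab$ or $cd$ is at most $r(j+j')\leq 2re\leq \frac{n}{k}-1$, which is exactly what $k\mid n$ together with the definition of $e$ guarantees. Hence no reduction modulo the minimal polynomial of $\gamma$ is needed, and coefficients of equal powers of $\gamma$ can be compared directly. Equating these coefficients yields a system in $u,u',w,w'$ and their $q$-th powers, and the lone Frobenius-carrying coordinate (the one built from $u^{q}-\xi^{b}u$, resp.\ $v^{q}$) is what forces $u'\in\mathbb{F}_q\cdot u$ and $w'\in\mathbb{F}_q\cdot w$.

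I expect the main obstacle to be the bookkeeping of this argument across all pairs: $U$-against-$U$, $U$-against-$V$, and $V$-against-$V$, each further split by whether the Frobenius positions $\ell,\ell'$ agree and by which of the indices $i_m$, the shift $b$, or the level $j$ differs. In every case the position of the unique $q$-th-power term pins the matching down, and a mismatch in any surviving parameter leaves a power of $\gamma$ present on one side but absent on the other, contradicting $ab=cd$ unless the elements are scalar multiples. This simultaneously delivers the inter-orbit distance $\geq 2k-2$ and the distinctness of all listed orbits, since two distinct $k$-dimensional spaces with $k\geq 2$ cannot share an orbit once the intersection bound holds; combined with the intra-orbit distance $2k-2$ from the previous step, the minimum distance of $\mathcal{C}\cup\mathcal{D}$ is exactly $2k-2$. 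Finally I would count: each orbit contributes $\frac{q^{n}-1}{q-1}$ codewords; for $\mathcal{C}$ the effective parameters $\ell,j,b,(i_1,\ldots,i_r)$ give $re(q-1)(q^{k}-1)^{r}$ orbits and hence $re(q^{k}-1)^{r}(q^{n}-1)$ codewords, while for $\mathcal{D}$ the index $i_\ell$ does not appear in $V^{\ell}_{i_1,\ldots,i_r,j}$, so only $(i_m)_{m\neq\ell}$ counts, giving $re(q^{k}-1)^{r-1}$ orbits and $re(q^{k}-1)^{r-1}\frac{q^{n}-1}{q-1}$ codewords; summing and writing $e=\lceil\frac{n}{2rk}\rceil-1$ reproduces the stated size.
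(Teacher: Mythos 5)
Your overall route coincides with the paper's: Lemma \ref{NSidon} plus Lemma \ref{sac} for the intra-orbit behaviour, Lemma \ref{dist} to reduce the inter-orbit question to the implication $ab=cd\Rightarrow a\mathbb{F}_q=c\mathbb{F}_q,\ b\mathbb{F}_q=d\mathbb{F}_q$, direct coefficient comparison in the $\gamma$-basis justified by $2re\le \frac{n}{k}-1$, and the orbit count --- including the correct observation that $i_\ell$ does not occur in $V^{\ell}_{i_1,\ldots,i_r,j}$, which is exactly why the $\mathcal{D}$-part contributes $(q^k-1)^{r-1}$ rather than $(q^k-1)^{r}$ orbits per choice of $(\ell,j)$.

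However, the mechanism you give for the decisive step is not correct, and that step is where essentially all of the work in the paper's proof lives. You assert that ``a mismatch in any surviving parameter leaves a power of $\gamma$ present on one side but absent on the other.'' That cannot happen: $a,c$ lie in the same space and $b,d$ lie in the same space, so $ab$ and $cd$ have identical supports in the $\gamma$-basis; the parameter mismatch distinguishes the two \emph{spaces}, not the two sides of the equation. What actually happens in the hard case ($j=j'$ and $\ell=\ell'$ but $(i_1,\ldots,i_r,b)\neq(i_1',\ldots,i_r',b')$) is that comparing coefficients yields a system of equations of the shape $u_1v_1^{q}\xi^{s}+u_1^{q}v_1\xi^{t}=u_2v_2^{q}\xi^{s}+u_2^{q}v_2\xi^{t}$ for various exponents $s,t$; one must subtract suitable pairs to isolate $u_1v_1^{q}=u_2v_2^{q}$, using that $\xi^{i-i'}\neq 1$ for distinct $i,i'\in\{0,\ldots,q^{k}-2\}$ (primitivity of $\xi$), while for $b\neq b'$ one instead assumes $u_1/u_2\notin\mathbb{F}_q$ and derives $\xi^{b'-b}=(u_2/v_1)^{q-1}$, which is impossible since $\xi^{b'-b}$ with $0<|b'-b|\le q-2$ is not a $(q-1)$-st power in $\mathbb{F}_{q^k}^{*}$. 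Likewise, in the $\ell\neq\ell'$ comparison the $\gamma$-exponents genuinely collide (e.g.\ $\ell j=\ell'j'$ can occur when $j>j'$), and the paper must locate an uncontaminated exponent such as $\gamma^{\ell'j'}$ or $\gamma^{\ell j+j'}$ in separate subcases. None of this is a ``missing monomial'' argument, so as written your plan would stall precisely at the cases that make the theorem nontrivial.
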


In order to prove Theorem \ref{Con1}, we give some  preliminary lemmas.

\begin{Lemma}\label{NSidon}
The subspaces defined in $(\ref{Equ1})$ and $(\ref{Equ2})$ are Sidon spaces.
\end{Lemma}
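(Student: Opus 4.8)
The plan is to exploit the common $\mathbb{F}_q$-linear shape shared by both families. Writing $t=\gamma^{j}$, every element of $U^{\ell}_{i_1,\dots,i_r,b,j}$ (resp. $V^{\ell}_{i_1,\dots,i_r,j}$) can be put in the form $xA+x^{q}B$ with $x\in\mathbb{F}_{q^{k}}$, where $A=A(t)=\sum_{m=0}^{r}A_m t^{m}\in\mathbb{F}_{q^{k}}[t]$ has constant term $A_0=1$, and $B=c\,t^{\ell}$ is a single monomial with $c=\xi^{i_\ell}\neq 0$ (resp. $c=1$). Since $q$ is a power of the characteristic, the map $\phi\colon u\mapsto uA+u^{q}B$ is $\mathbb{F}_q$-linear; reading off its $t^{0}$-coefficient (namely $u$, because $A_0=1$ and $\ell\geq 1$) shows $\phi(u)=0\Rightarrow u=0$, so $\phi$ is an $\mathbb{F}_q$-linear bijection onto the stated $k$-dimensional space that moreover satisfies $\phi(\lambda u)=\lambda\phi(u)$ for $\lambda\in\mathbb{F}_q$. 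Hence the Sidon condition on $a,b,c,d$ translates verbatim into the statement: for nonzero $x,y,z,w\in\mathbb{F}_{q^{k}}$, the equality $\phi(x)\phi(y)=\phi(z)\phi(w)$ forces $\{x\mathbb{F}_q,y\mathbb{F}_q\}=\{z\mathbb{F}_q,w\mathbb{F}_q\}$. Treating $U$ and $V$ simultaneously this way is what keeps the argument uniform.

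The key step is coefficient comparison in $t$. Expanding, $\phi(x)\phi(y)=xy\,A^{2}+(xy^{q}+x^{q}y)AB+(xy)^{q}B^{2}$, a polynomial in $t$ of degree at most $2r$. Because $j\leq\lceil n/(2rk)\rceil-1$ forces $2rj\leq n/k-1$, and $\gamma$ has degree $n/k$ over $\mathbb{F}_{q^{k}}$, the powers $1,t,\dots,t^{2r}$ are $\mathbb{F}_{q^{k}}$-linearly independent, so I may equate the two products coefficient by coefficient. Comparing the $t^{0}$-coefficient, where only $A^{2}$ contributes with value $A_0^{2}=1$, gives $xy=zw$. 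Comparing the $t^{\ell}$-coefficient, note that $B^{2}=c^{2}t^{2\ell}$ contributes nothing (as $\ell\geq 1$), that $AB=c\,t^{\ell}A$ contributes $c\cdot[t^{0}]A=c$, and that the $A^{2}$-term is literally the same polynomial on both sides; using $xy=zw$ to cancel the latter leaves $c(xy^{q}+x^{q}y)=c(zw^{q}+z^{q}w)$, i.e. $xy^{q}+x^{q}y=zw^{q}+z^{q}w$.

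It remains to convert these two identities into the Sidon conclusion. As $a,b,c,d$ are nonzero, so are $x,y,z,w$, and $P:=xy=zw\neq 0$. Factoring gives $xy^{q}+x^{q}y=P(x^{q-1}+y^{q-1})$ and likewise $zw^{q}+z^{q}w=P(z^{q-1}+w^{q-1})$, whence $x^{q-1}+y^{q-1}=z^{q-1}+w^{q-1}$; moreover $x^{q-1}y^{q-1}=P^{q-1}=z^{q-1}w^{q-1}$. Thus $\{x^{q-1},y^{q-1}\}$ and $\{z^{q-1},w^{q-1}\}$ are the two roots of one and the same quadratic $T^{2}-(x^{q-1}+y^{q-1})T+P^{q-1}$, so they agree as multisets. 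Since $u^{q-1}=v^{q-1}$ is equivalent to $u\mathbb{F}_q=v\mathbb{F}_q$ for nonzero $u,v$, this upgrades to $\{x\mathbb{F}_q,y\mathbb{F}_q\}=\{z\mathbb{F}_q,w\mathbb{F}_q\}$, as needed.

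The genuinely delicate points, and where I expect the main work, are twofold: first, the range bookkeeping that guarantees $2rj\leq n/k-1$ (and hence legitimizes coefficient comparison) directly from $n\geq (2r+1)k$ and $1\leq j\leq\lceil n/(2rk)\rceil-1$; and second, the realization that exactly the coefficients at $t^{0}$ and $t^{\ell}$ — and no others — are required, with the clean isolation of $xy^{q}+x^{q}y$ made possible precisely because $B$ is a \emph{monomial}. In particular the top coefficient $t^{2r}$ yields no new information when $\ell<r$, so the argument deliberately avoids it; once the two relations above are secured, the closing symmetric-function deduction is routine.
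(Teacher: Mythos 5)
Your proof is correct, and at its core it follows the same strategy as the paper: use $2rj\leq n/k-1$ to get $\mathbb{F}_{q^k}$-linear independence of $1,\gamma^{j},\dots,\gamma^{2rj}$, compare coefficients to extract the two identities $xy=zw$ and $xy^{q}+x^{q}y=zw^{q}+z^{q}w$, and then deduce the Sidon condition from them. The execution differs in two worthwhile ways. First, the paper only writes out the case $\ell=r$ of the family $U$ (extracting the second identity from the top coefficient of $\gamma^{2rj}$, which is special to $\ell=r$) and dismisses the remaining cases of $U$ and all of $V$ with ``proved similarly''; your normal form $xA+x^{q}B$ with $A_{0}=1$ and $B$ a nonzero monomial $c\,t^{\ell}$, together with reading the coefficient of $t^{\ell}$ (where the $A^{2}$ contribution cancels via $xy=zw$ and $B^{2}$ cannot interfere since $2\ell>\ell$), handles every $\ell$ and both families in one stroke, which is a genuine tightening of the argument. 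Second, your closing step is different from the paper's: they set $\lambda=u_{1}/u_{2}$ and split into cases $\lambda\in\mathbb{F}_q$ versus $\lambda\notin\mathbb{F}_q$, substituting to get $u_{2}^{q}v_{1}=u_{2}v_{1}^{q}$ in the latter case, whereas you divide by $P=xy$ and observe that $\{x^{q-1},y^{q-1}\}$ and $\{z^{q-1},w^{q-1}\}$ are the root multisets of the same quadratic $T^{2}-ST+P^{q-1}$, then use $u^{q-1}=v^{q-1}\Leftrightarrow u\mathbb{F}_q=v\mathbb{F}_q$. Both finishes are elementary and correct; yours makes the symmetric structure explicit and avoids the case split. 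No gaps.
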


\begin{proof} We only prove that $U^{r}_{i_{1},i_{2},\ldots,i_{r},b,j}$ is a Sidon space. The other cases can be proved similarly.
Let
\[
\begin{array}{l}
\bar{u}_{1}=u_{1}+u_{1}\xi^{i_{1}}\gamma^{j}+\cdots+u_{1}\xi^{i_{r-1}}\gamma^{(r-1)j}+(u_{1}^{q}-\xi^{b}u_{1})\xi^{i_{r}}\gamma^{rj},\smallskip \\
\bar{v}_{1}=v_{1}+v_{1}\xi^{i_{1}}\gamma^{j}+\cdots+v_{1}\xi^{i_{r-1}}\gamma^{(r-1)j}+(v_{1}^{q}-\xi^{b}v_{1})\xi^{i_{r}}\gamma^{rj},\smallskip \\
\bar{u}_{2}=u_{2}+u_{2}\xi^{i_{1}}\gamma^{j}+\cdots+u_{2}\xi^{i_{r-1}}\gamma^{(r-1)j}+(u_{2}^{q}-\xi^{b}u_{2})\xi^{i_{r}}\gamma^{rj},\smallskip \\
\bar{v}_{2}=v_{2}+v_{2}\xi^{i_{1}}\gamma^{j}+\cdots+v_{2}\xi^{i_{r-1}}\gamma^{(r-1)j}+(v_{2}^{q}-\xi^{b}v_{2})\xi^{i_{r}}\gamma^{rj}
\end{array}
\]  
be four distinct nonzero elements of $U^{r}_{i_{1},i_{2},\ldots,i_{r},b,j}$, where $u_{1}$, $u_{2}$, $v_{1}$, $v_{2}$ are nonzero elements of $\mathbb{F}_{q^{k}}$. Suppose that $\bar{u}_{1}\bar{v}_{1}=\bar{u}_{2}\bar{v}_{2}$. We need to show that $\{\bar{u}_{1}\mathbb{F}_{q},\bar{v}_{1}\mathbb{F}_{q}\}=\{\bar{u}_{2}\mathbb{F}_{q},\bar{v}_{2}\mathbb{F}_{q}\}.$

Since $j\leq \lceil \frac{n}{2rk}\rceil-1$, we have $2rj\leq \frac{n}{k}-1$, thereby $1$, $\gamma^{j}$, $\gamma^{2j}$, $\ldots$, $\gamma^{2rj}$ are linearly independent over $\mathbb{F}_{q^{k}}$. Comparing the coefficients of $1,\gamma^{2rj}$ respectively in $\bar{u}_{1}\bar{v}_{1}=\bar{u}_{2}\bar{v}_{2}$ after expansion, we deduce that 
\begin{equation}\label{E1}
	u_{1}v_{1}=u_{2}v_{2},
\end{equation}
\begin{equation}\label{E2}
      u_{1}^{q}v_{1}+v_{1}^{q}u_{1}=u_{2}^{q}v_{2}+v_{2}^{q}u_{2}.
\end{equation}
Set $\frac{u_{1}}{u_{2}}=\lambda$. If $\lambda\in \mathbb{F}_{q}$, then $\frac{\bar{u}_{1}}{\bar{u}_{2}}, \frac{\bar{v}_{2}}{\bar{v}_{1}}\in \mathbb{F}_{q}$, thereby $\bar{u}_{1}\mathbb{F}_{q}=\bar{u}_{2}\mathbb{F}_{q}$ and $\bar{v}_{1}\mathbb{F}_{q}=\bar{v}_{2}\mathbb{F}_{q}$. If $\lambda \notin \mathbb{F}_{q}$, then after replacing $u_1,v_2$ by $\lambda u_{2}, \lambda v_{1}$ respectively, Equation (\ref{E2}) becomes the following: $$\lambda^{q}u_{2}^{q}v_{1}+\lambda v_{1}^{q}u_{2}=\lambda u_{2}^{q}v_{1}+\lambda^{q}v_{1}^{q}u_{2}.$$
Since $\lambda^{q}\neq \lambda$, we have $u_{2}^{q}v_{1}=u_{2}v_{1}^{q}$,  which
 implies that $\frac{u_{2}}{v_{1}}\in \mathbb{F}_{q}$,  $\bar{u}_{2}\mathbb{F}_{q}=\bar{v}_{1}\mathbb{F}_{q}$ and $\bar{u}_{1}\mathbb{F}_{q}=\bar{v}_{2}\mathbb{F}_{q}$.

So, the subspace $U^{r}_{i_{1},i_{2},\dots,i_{r},b,j}$ is a Sidon space of dimension $k$ over $\mathbb{F}_{q}$.\qed
\end{proof}


\begin{Lemma}\label{Main}
Let $\mathcal{C}^{\ell}$ be defined as in Construction $\ref{Construction}$. Then $\mathcal{C}^{\ell}$ is a cyclic $(n,2k-2,k)_q$-CDC of size $(\lceil \frac{n}{2rk} \rceil-1)(q^{k}-1)^{r}(q^{n}-1)$.
\end{Lemma}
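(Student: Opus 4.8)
The plan is to reduce the statement to two facts: that each individual orbit is an optimal full-length orbit code, and that orbits coming from distinct parameter tuples are ``far apart'' in the sense of Lemma~\ref{dist}. First, by Lemma~\ref{NSidon} every base subspace $U^{\ell}_{i_1,\dots,i_r,b,j}$ is a Sidon space, so Lemma~\ref{sac} guarantees that each $\mathcal{C}^{\ell}_{i_1,\dots,i_r,b,j}$ is an optimal full-length orbit code; in particular it is cyclic, of constant dimension $k$, has internal minimum distance exactly $2k-2$, and has cardinality $\frac{q^{n}-1}{q-1}$. Since $\mathcal{C}^{\ell}$ is a union of such orbits it is automatically cyclic and of constant dimension $k$, and its minimum distance is at most $2k-2$.

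The heart of the proof is to show that any two base subspaces $U=U^{\ell}_{i_1,\dots,i_r,b,j}$ and $V=U^{\ell}_{i_1',\dots,i_r',b',j'}$ attached to \emph{distinct} tuples satisfy $\dim(U\cap\alpha V)\le 1$ for every $\alpha\in\mathbb{F}_{q^{n}}^{*}$. By Lemma~\ref{dist} this is equivalent, and more convenient, to verifying condition~(2): whenever $a(u_1)b(v_1)=a(u_2)b(v_2)$ for nonzero $u_1,u_2,v_1,v_2\in\mathbb{F}_{q^{k}}$, where $a(u)\in U$ and $b(v)\in V$ denote the generic elements, one must have $u_1/u_2\in\mathbb{F}_q$ and $v_1/v_2\in\mathbb{F}_q$. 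A single such statement yields everything needed at once: taking $\alpha=1$ and using $k\ge 2$ shows distinct tuples give distinct subspaces, hence distinct orbits, so the orbits are pairwise disjoint; and for any members $\alpha U$, $\beta V$ the distance equals $2k-2\dim\!\big(U\cap\alpha^{-1}\beta V\big)\ge 2k-2$.

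To establish condition~(2), I would expand $a(u)b(v)$ and read off coefficients with respect to the powers $\gamma^{mj+m'j'}$ with $0\le m,m'\le r$. The bound $j,j'\le e$ gives $rj+rj'\le 2re\le \frac{n}{k}-1$, so every exponent stays below $\frac{n}{k}$ and powers with distinct exponents are linearly independent over $\mathbb{F}_{q^{k}}$. Comparing the coefficient of $\gamma^{0}$ yields $u_1v_1=u_2v_2$ exactly as in Lemma~\ref{NSidon}, while comparing the coefficients carrying the Frobenius-twisted terms at position $\ell$ produces a second relation analogous to $(\ref{E2})$; combined with the primitivity of $\xi$ these force the scalar relations. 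The remaining coefficients, which record the data $\xi^{i_m}$ and $\xi^{i_m'}$, are then used either to conclude or to show the tuples were not in fact distinct.

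The main obstacle is the case analysis in this last step. When $j\ne j'$ the exponents $mj+m'j'$ are spread out and the twisted terms can usually be isolated cleanly; when $j=j'$ many pairs $(m,m')$ collapse onto the same power $\gamma^{sj}$, so the coefficient equations mix several contributions that must be disentangled using the distinctness of $b,b'$ and of the $i_m,i_m'$ together with the fact that $\lambda\mapsto\lambda^{q}$ fixes exactly $\mathbb{F}_q$. Treating uniformly all sub-cases (tuples differing only in $b$, only in a single $i_m$, or in $j$), and checking that no accidental coincidence $mj+m'j'=\tilde m j+\tilde m' j'$ with $(m,m')\ne(\tilde m,\tilde m')$ spoils the extraction of the two key relations, is the delicate part. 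Once condition~(2) is secured for every pair of distinct tuples, the count is immediate: there are $e\,(q^{k}-1)^{r}(q-1)$ tuples, each orbit has size $\frac{q^{n}-1}{q-1}$, and disjointness gives $|\mathcal{C}^{\ell}|=e\,(q^{k}-1)^{r}(q-1)\cdot\frac{q^{n}-1}{q-1}=(\lceil\frac{n}{2rk}\rceil-1)(q^{k}-1)^{r}(q^{n}-1)$.
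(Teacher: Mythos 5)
Your plan coincides with the paper's proof in every structural respect: Lemma~\ref{NSidon} plus Lemma~\ref{sac} for the individual orbits, Lemma~\ref{dist} to reduce the cross-orbit distance condition to the implication ``$\bar u_1\bar v_1=\bar u_2\bar v_2\Rightarrow\bar u_1\mathbb{F}_q=\bar u_2\mathbb{F}_q,\ \bar v_1\mathbb{F}_q=\bar v_2\mathbb{F}_q$'', coefficient comparison against the powers $\gamma^{mj+m'j'}$ justified by $2re\le\frac{n}{k}-1$, and the final count $e\,(q^k-1)^r(q-1)\cdot\frac{q^n-1}{q-1}$. However, what you explicitly set aside as ``the delicate part'' is precisely where the lemma is proved, so as it stands the argument has a genuine gap rather than being a complete proof. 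Concretely, two things are missing. First, in the case $j\ne j'$ you need the exponent-separation argument: with $j<j'$ one checks $\ell j<\ell j+xj'$ for $x\ge 1$ and $\ell j<\ell j'+yj$ for $y\ge 0$, so the coefficient of $\gamma^{\ell j}$ isolates $u_1^qv_1=u_2^qv_2$, which together with $u_1v_1=u_2v_2$ gives $\frac{v_1}{v_2}=(\frac{v_1}{v_2})^q\in\mathbb{F}_q^*$. Second, in the case $j=j'$ the comparison yields the three families of relations (the paper's (\ref{b2}), (\ref{b4}), (\ref{b3})), and one must run the three subcases $b\ne b'$; $b=b'$ with $i_\ell\ne i_{\ell}'$; and $b=b'$, $i_\ell=i_{\ell}'$ with some $i_y\ne i_y'$. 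Only the first of these resembles your description of ``a second relation analogous to (\ref{E2})'' resolved by the $\lambda$-substitution trick (there the contradiction is that $\xi^{b'-b}$ would be a $(q-1)$-th power while $0<|b'-b|\le q-2$). The other two subcases work differently: one subtracts two of the coefficient relations to get $u_1v_1^q(\xi^{i'-i}-1)=u_2v_2^q(\xi^{i'-i}-1)$ and invokes primitivity of $\xi$ to cancel the factor, landing again on $u_1v_1^q=u_2v_2^q$. Your worry about ``accidental coincidences'' of exponents is legitimate but, for a single fixed $\ell$ with $j=j'$, the distinct exponents $0,\ j,\ \ell j,\ 2\ell j,\ (\ell+t)j$ separate cleanly; none of the subcases fails. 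So the strategy is sound and identical to the paper's, but the proposal defers rather than supplies the verification that constitutes the proof.
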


\begin{proof}
By Lemma \ref{NSidon}, each $U^{\ell}_{i_{1},i_{2},\ldots,i_{r},b,j}$ is a Sidon space, then by Lemma \ref{sac} each $\mathcal{C}^{\ell}_{i_{1},i_{2},\ldots,i_{r},b,j}$ is an optimal full-length orbit code, i.e., a cyclic $(n,2k-2,k)_q$-CDC of size $\frac{q^{n}-1}{q-1}$. To show that $\mathcal{C}^{\ell}$ has minimum distance $2k-2$, it remains to show that $\dim(\alpha U^{\ell}_{i_{1},i_{2},\ldots,i_{r},b,j}\cap U^{\ell}_{i_{1}',i_{2}',\ldots,i_{r}',b',j'})\leq 1$ for any $(i_{1},i_{2},\ldots,i_{r},b,j)\neq (i_{1}',i_{2}',\ldots,i_{r}',b',j')$ and  $\alpha \in \mathbb{F}_{q^{n}}^{*}$. 
By Lemma \ref{dist}, it suffices to show that for any nonzero elements $\bar{u}_{1},\bar{u}_{2}\in U^{\ell}_{i_{1},i_{2},\ldots,i_{r},b,j}$  and nonzero elements $\bar{v}_{1},\bar{v}_{2}\in U^{\ell}_{i_{1}',i_{2}',\ldots,i_{r}',b',j'}$, the equality $\bar{u}_{1}\cdot \bar{v}_{1}=\bar{u}_{2}\cdot \bar{v}_{2}$ implies $\bar{u}_{1}\mathbb{F}_{q}=\bar{u}_{2}\mathbb{F}_{q}$ and $\bar{v}_{1}\mathbb{F}_{q}=\bar{v}_{2}\mathbb{F}_{q}$. 

Let $\bar{u}_{1}, \bar{u}_{2}, \bar{v}_{1}$ and $\bar{v}_{2}$ be of the following form:
\[
\begin{array}{l}
\bar{u}_{1}=u_{1}+u_{1}\xi^{i_{1}}\gamma^{j}+\cdots+(u_{1}^{q}-\xi^{b}u_{1})\xi^{i_{\ell}}\gamma^{\ell j}+\cdots+u_{1}\xi^{i_{r-1}}\gamma^{(r-1)j}+u_{1}\xi^{i_{r}}\gamma^{rj}, \smallskip \\		
\bar{u}_{2}=u_{2}+u_{2}\xi^{i_{1}}\gamma^{j}+\cdots+(u_{2}^{q}-\xi^{b}u_{2})\xi^{i_{\ell}}\gamma^{\ell j}+\cdots +u_{2}\xi^{i_{r-1}}\gamma^{(r-1)j}+u_{2}\xi^{i_{r}}\gamma^{rj}, \smallskip \\	
\bar{v}_{1}=v_{1}+v_{1}\xi^{i_{1}'}\gamma^{j'}+\cdots+(v_{1}^{q}-\xi^{b'}v_{1})\xi^{i_{\ell}'}\gamma^{\ell j'}+\cdots+v_{1}\xi^{i_{r-1}'}\gamma^{(r-1)j'}+v_{1}\xi^{i_{r}'}\gamma^{rj'}, \smallskip \\	
\bar{v}_{2}=v_{2}+v_{2}\xi^{i_{1}'}\gamma^{j'}+\cdots+(v_{2}^{q}-\xi^{b'}v_{2})\xi^{i_{\ell}'}\gamma^{\ell j'}+\cdots+v_{2}\xi^{i_{r-1}'}\gamma^{(r-1)j'}+v_{2}\xi^{i_{r}'}\gamma^{rj'},\end{array}
\] 
where $u_{1}, u_{2}, v_{1}$ and $v_{2}$ are nonzero elements of $\mathbb{F}_{q^{k}}$. Simple computation shows that 
\begin{equation*}
\begin{split}
&\quad \ \bar{u}_{1}\cdot \bar{v}_{1}\\
&=u_1v_1+u_1(v_1^q-v_1\xi^{b'})\xi^{i_{\ell}'}\gamma^{\ell j'}+\sum_{t=1,t\neq \ell}^{r}u_1v_1\xi^{i_{t}'}\gamma^{tj'}+(u_1^q-\xi^{b}u_1)v_1\xi^{i_{\ell}}\gamma^{\ell j}\\
&\quad + (u_1^q-\xi^{b}u_1)(v_1^q-\xi^{b'}v_1)\xi^{i_{\ell}+i_{\ell}'}\gamma^{\ell j+\ell j'} +\sum_{t=1,t\neq \ell}^{r}(u_1^q-\xi^{b}u_1)v_1\xi^{i_{\ell}+i_{t}'}\gamma^{\ell j+tj'}\\
&\quad +\sum_{m=1,m\neq \ell}^{r}u_1v_1\xi^{i_{m}}\gamma^{mj}+\sum_{m=1,m\neq \ell}^{r}u_1(v_1^q-\xi^{b'}v_1)\xi^{i_{m}+i_{\ell}'}\gamma^{mj+\ell j'}\\
&\quad+\sum_{m=1,m\neq \ell}^r\sum_{t=1,t\neq \ell}^{r}u_1v_1\xi^{i_{m}+i_{t}'}\gamma^{mj+tj'}.
\end{split}
\end{equation*}
Since $1,\gamma,\ldots,\gamma^{2r\max\{j,j'\}}$ are linearly independent over $\mathbb{F}_{q^k}$, comparing coefficients of $\bar{u}_{1}\cdot \bar{v}_{1}$ and $\bar{u}_{2}\cdot \bar{v}_{2}$ after expansion gives
\begin{equation}\label{b1}
	u_{1}v_{1}=u_{2}v_{2},
\end{equation} 
and the equality $\bar{u}_{1}\cdot \bar{v}_{1}=\bar{u}_{2}\cdot \bar{v}_{2}$ can be simplified as follows:
\begin{equation}\label{b0}
\begin{split}
&\quad u_1v_1^q\xi^{i_{\ell}'}\gamma^{\ell j'}+u_1^qv_1\xi^{i_{\ell}}\gamma^{\ell j}-(u_1^qv_1\xi^{b'}+u_1v_1^q\xi^{b})\xi^{i_{\ell}+i_{\ell}'}\gamma^{\ell j+\ell j'}\\ &\quad +\sum_{t=1,t\neq \ell}^{r}u_1^qv_1\xi^{i_{\ell}+i_{t}'}\gamma^{\ell j+tj'}+\sum_{m=1,m\neq \ell}^{r}u_1v_1^q\xi^{i_{m}+i_{\ell}'}\gamma^{mj+\ell j'}\\
&= u_2v_2^q\xi^{i_{\ell}'}\gamma^{\ell j'}+u_2^qv_2\xi^{i_{\ell}}\gamma^{\ell j}-(u_2^qv_2\xi^{b'}+u_2v_2^q\xi^{b})\xi^{i_{\ell}+i_{\ell}'}\gamma^{\ell j+\ell j'}\\ &\quad +\sum_{t=1,t\neq \ell}^{r}u_2^qv_2\xi^{i_{\ell}+i_{t}'}\gamma^{\ell j+tj'}+\sum_{m=1,m\neq \ell}^{r}u_2v_2^q\xi^{i_{m}+i_{\ell}'}\gamma^{mj+\ell j'}.
\end{split}
\end{equation}
We distinguish the following two cases.

Case 1:  $j\neq j'$.

Without loss of generality, we can assume that $j<j'$. Notice that $\ell j< \min \{\ell j+xj'\colon 1\leq x\leq r\}$ and  $\ell j<\min \{\ell j'+yj\colon 0\leq y\leq r\}$. Comparing coefficients of $\gamma^{\ell j}$ in Equality (\ref{b0}) gives $u_1v_1=u_2v_2$ and $u_{1}^{q}v_{1}=u_{2}^{q}v_{2}$.
Then $\frac{v_{1}}{v_{2}}=(\frac{u_{2}}{u_{1}})^{q}=(\frac{v_{1}}{v_{2}})^{q}$, thereby $\frac{v_{1}}{v_{2}}\in \mathbb{F}_{q}^{*}$. It follows that $\bar{u}_{1}\mathbb{F}_{q}=\bar{u}_{2}\mathbb{F}_{q}$ and $\bar{v}_{1}\mathbb{F}_{q}=\bar{v}_{2}\mathbb{F}_{q}$, as desired.

Case 2:  $j=j'$. 

Since $(i_{1},i_{2},\ldots,i_{r},b,j)\neq (i_{1}',i_{2}',\ldots,i_{r}',b',j')$, we have that  $$(i_{1},i_{2},\ldots, i_{r},b)\neq(i_{1}',i_{2}',\ldots,i_{r}',b').$$
Since $2re\leq \frac{n}{k}-1$, we have that $1$, $\gamma^{j}$, $\gamma^{2j}$, $\ldots$, $\gamma^{2rj}$ are linearly independent over $\mathbb{F}_{q^{k}}$. Comparing coefficients in Equality (\ref{b0})  gives
\begin{equation}\label{b2}
		u_{1}v_{1}^{q}\xi^{b'}+u_{1}^{q}v_{1}\xi^{b}=u_{2}v_{2}^{q}\xi^{b'}+u_{2}^{q}v_{2}\xi^{b},
\end{equation}
\begin{equation}\label{b4}
		u_{1}v_{1}^{q}\xi^{i_{\ell}'}+u_{1}^{q}v_{1}\xi^{i_{\ell}}=u_{2}v_{2}^{q}\xi^{i_{\ell}'}+u_{2}^{q}v_{2}\xi^{i_{\ell}},
\end{equation}
\begin{equation}\label{b3}
		u_{1}v_{1}^{q}\xi^{i_{x}+i_{\ell}'}+u_{1}^{q}v_{1}\xi^{i_{x}'+i_{\ell}}=u_{2}v_{2}^{q}\xi^{i_{x}+i_{\ell}'}+u_{2}^{q}v_{2}\xi^{i_{x}'+i_{\ell}}
\end{equation}
for $1\leq x\leq r$ and $x\neq \ell$ .

If $b\neq b'$, we set $\frac{u_{1}}{u_{2}}=\lambda\in \mathbb{F}_{q^{k}}$. If $\lambda\notin \mathbb{F}_{q}$, by Equation (\ref{b2}) along with $u_{1}=\lambda u_{2}$ and $v_{2}=\lambda v_{1}$, we can duduce that $\xi^{b'-b}=(\frac{u_{2}}{v_{1}})^{q-1}$, a contradiction to $0\leq b\neq b'\leq q-2$. Hence, $\lambda\in \mathbb{F}_{q}$. Then $\bar{u}_{1}\mathbb{F}_{q}=\bar{u}_{2}\mathbb{F}_{q}$ and $\bar{v}_{1}\mathbb{F}_{q}=\bar{v}_{2}\mathbb{F}_{q}$, as desired.

If $b=b'$ and $i_{\ell}\neq i_{\ell}'$, Equations (\ref{b2}) and (\ref{b4}) imply $u_{1}v_{1}^{q}(\xi^{i_{\ell}'-i_{\ell}}-1)=u_{2}v_{2}^{q}(\xi^{i_{\ell}'-i_{\ell}}-1)$. Since $\xi$ is a primitive element of $\mathbb{F}_{q^{k}}$, $0\leq i_{\ell},i_{\ell}' \leq q^{k}-2$ and $i_{\ell}\neq i_{\ell}'$, we deduce that $\xi^{i_{\ell}'-i_{\ell}}\neq 1$, thereby $u_{1}v_{1}^{q}=u_{2}v_{2}^{q}$. It follows from Equation (\ref{b1}) that $\frac{u_{1}}{u_{2}}=(\frac{v_{2}}{v_{1}})^{q}=(\frac{u_{1}}{u_{2}})^{q}$, thereby $\frac{u_{1}}{u_{2}}\in \mathbb{F}_{q}^{*}$. We then deduce that $\bar{u}_{1}\mathbb{F}_{q}=\bar{u}_{2}\mathbb{F}_{q}$ and $\bar{v}_{1}\mathbb{F}_{q}=\bar{v}_{2}\mathbb{F}_{q}$, as desired.

If $b=b'$ and $i_{\ell}= i_{\ell}'$, then we have $i_{y}\neq i_{y}'$ for some $y$ with $1\leq y\leq r$ and $y\neq \ell$. By Equations (\ref{b4}) and  (\ref{b3}) with $x=y$, we can duduce that $u_{1}v_{1}^{q}(\xi^{i_{y}'-i_{y}}-1)=u_{2}v_{2}^{q}(\xi^{i_{y}'-i_{y}}-1)$. Since $\xi$ is a primitive element of $\mathbb{F}_{q^{k}}$, $0\leq i_{y}, i_{y}' \leq q^{k}-2$ and $i_{y}\neq i_{y}'$, we deduce that $\xi^{i_{y}'-i_{y}}\neq 1$ and $u_{1}v_{1}^{q}=u_{2}v_{2}^{q}$.   It follows from Equation (\ref{b1}) that $\frac{u_{1}}{u_{2}}=(\frac{v_{2}}{v_{1}})^{q}=(\frac{u_{1}}{u_{2}})^{q}$, thereby $\frac{u_{1}}{u_{2}}\in \mathbb{F}_{q}^{*}$. We then deduce that $\bar{u}_{1}\mathbb{F}_{q}=\bar{u}_{2}\mathbb{F}_{q}$ and $\bar{v}_{1}\mathbb{F}_{q}=\bar{v}_{2}\mathbb{F}_{q}$, as desired.

To sum up, we have shown that $\mathcal{C}^{\ell}$ has minimum distance $2k-2$. In particular, we have shown that the $(\lceil \frac{n}{2rk} \rceil-1)(q^{k}-1)^{r}(q^{n}-1)$ elements of $\mathcal{C}^{\ell}$ are distinct, so $|\mathcal{C}^{\ell}|=(\lceil \frac{n}{2rk} \rceil-1)(q^{k}-1)^{r}(q^{n}-1)$. \qed
\end{proof}

\begin{Lemma}\label{CSC-C}
$\mathcal{C}$ is a cyclic $(n,2k-2,k)_q$-CDC of size $r(\lceil \frac{n}{2rk} \rceil-1)(q^{k}-1)^{r}(q^{n}-1)$.
\end{Lemma}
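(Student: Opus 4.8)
The plan is to bootstrap from Lemma~\ref{Main}. Since $\mathcal{C}=\bigcup_{\ell=1}^{r}\mathcal{C}^{\ell}$ is a union of orbits under multiplication by $\mathbb{F}_{q^{n}}^{*}$, it is automatically cyclic, and each codeword is a $k$-dimensional $\mathbb{F}_{q}$-subspace because the map $u\mapsto \bar u$ from \eqref{Equ1} is $\mathbb{F}_{q}$-linear and injective (the coordinate of $\bar u$ at $\gamma^{0}$ is $u$ itself). By Lemma~\ref{Main} each block $\mathcal{C}^{\ell}$ is a cyclic $(n,2k-2,k)_q$-CDC of size $(\lceil\frac{n}{2rk}\rceil-1)(q^{k}-1)^{r}(q^{n}-1)$. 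Hence the whole assertion reduces to two things: first, that a codeword from $\mathcal{C}^{\ell}$ and a codeword from $\mathcal{C}^{\ell'}$ with $\ell\neq\ell'$ still meet in dimension at most $1$; and second, that this forces the $r$ blocks to be pairwise disjoint, so that $|\mathcal{C}|=r\,|\mathcal{C}^{\ell}|$ equals the claimed size. The second point follows from the first: if $\dim(U^{\ell}_{\cdots}\cap \gamma U^{\ell'}_{\cdots})\leq 1<k$ for every $\gamma$, then no codeword of $\mathcal{C}^{\ell}$ equals one of $\mathcal{C}^{\ell'}$.

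For the cross case, fix $\ell\neq\ell'$ and, via Lemma~\ref{dist}, reduce $\dim(\alpha U^{\ell}_{i_{1},\dots,i_{r},b,j}\cap U^{\ell'}_{i_{1}',\dots,i_{r}',b',j'})\leq 1$ to the multiplicative statement: for nonzero $\bar u_{1},\bar u_{2}\in U^{\ell}_{\cdots}$ and $\bar v_{1},\bar v_{2}\in U^{\ell'}_{\cdots}$, the equality $\bar u_{1}\bar v_{1}=\bar u_{2}\bar v_{2}$ implies $\bar u_{1}\mathbb{F}_{q}=\bar u_{2}\mathbb{F}_{q}$ and $\bar v_{1}\mathbb{F}_{q}=\bar v_{2}\mathbb{F}_{q}$. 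The device I would use is the factored form $\bar u_{t}=u_{t}P_{\ell}+u_{t}^{q}R$ and $\bar v_{t}=v_{t}Q_{\ell'}+v_{t}^{q}S$, where $R=\xi^{i_{\ell}}\gamma^{\ell j}$, $S=\xi^{i_{\ell'}'}\gamma^{\ell' j'}$, and $P_{\ell}=1+\sum_{m\neq\ell}\xi^{i_{m}}\gamma^{mj}-\xi^{b+i_{\ell}}\gamma^{\ell j}$, $Q_{\ell'}$ are the analogous fixed $\mathbb{F}_{q^{k}}$-polynomials in $\gamma$ with constant term $1$. Because $P_{\ell},Q_{\ell'},R,S$ depend only on the shared parameters, the coefficient of $\gamma^{0}$ in $\bar u_{1}\bar v_{1}=\bar u_{2}\bar v_{2}$ gives $u_{1}v_{1}=u_{2}v_{2}$; the terms $P_{\ell}Q_{\ell'}$ and $RS$ then cancel on the two sides, leaving
\[
X\,\xi^{i_{\ell'}'}\gamma^{\ell' j'}P_{\ell}+Y\,\xi^{i_{\ell}}\gamma^{\ell j}Q_{\ell'}=0,
\]
where $X=u_{1}v_{1}^{q}-u_{2}v_{2}^{q}$ and $Y=u_{1}^{q}v_{1}-u_{2}^{q}v_{2}$.

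The core is to extract $X=0$ or $Y=0$ from this single relation by comparing one well-chosen coefficient, which is legitimate since $j,j'\leq e$ makes every exponent at most $2re\leq\frac{n}{k}-1$, so $1,\gamma,\dots,\gamma^{2re}$ are $\mathbb{F}_{q^{k}}$-linearly independent. The two summands have smallest $\gamma$-exponents $\ell' j'$ and $\ell j$, each carried by the constant term of its companion polynomial. When $\ell j\neq\ell' j'$ (which always holds if $j=j'$, since then $\ell j\neq\ell' j$), the smaller of the two exponents occurs on exactly one side, and equating its coefficient to $0$ gives $Y=0$ or $X=0$. In all cases, combining $X=0$ (resp.\ $Y=0$) with $u_{1}v_{1}=u_{2}v_{2}$ yields $(v_{1}/v_{2})^{q-1}=1$ (resp.\ $(u_{1}/u_{2})^{q-1}=1$), hence $v_{1}/v_{2}\in\mathbb{F}_{q}^{*}$; since $c\in\mathbb{F}_{q}$ satisfies $c^{q}=c$, this propagates through the factored forms to give $\bar u_{1}\mathbb{F}_{q}=\bar u_{2}\mathbb{F}_{q}$ and $\bar v_{1}\mathbb{F}_{q}=\bar v_{2}\mathbb{F}_{q}$, as required.

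The one delicate configuration, and the step I expect to be the main obstacle, is $j\neq j'$ together with $\ell j=\ell' j'$: here the two smallest exponents coincide and a single comparison only relates $X$ and $Y$. I would instead compare the next exponent $\ell' j'+\min(j,j')$; using $j\neq j'$ one checks it lies strictly below every remaining term on the opposite side and is carried by a nonzero coefficient of $P_{\ell}$ or $Q_{\ell'}$, forcing $X=0$ or $Y=0$ as before. Verifying that this second monomial genuinely occurs on only one side --- and that the independence bound $2re\leq\frac{n}{k}-1$ rules out any wrap-around that could reintroduce a collision --- is the part requiring the most careful exponent bookkeeping.
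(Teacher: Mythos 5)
Your proof is correct and follows essentially the same route as the paper's: invoke Lemma~\ref{Main} for each block, reduce the cross case $\ell\neq\ell'$ to a product condition via Lemma~\ref{dist}, extract $u_{1}v_{1}=u_{2}v_{2}$ from the constant term, and then compare one further well-chosen power of $\gamma$ (the smaller of $\ell j,\ell'j'$, or $\ell j+\min(j,j')$ when they coincide) to force $u_{1}^{q}v_{1}=u_{2}^{q}v_{2}$ or $u_{1}v_{1}^{q}=u_{2}v_{2}^{q}$. The factored form $\bar u=uP_{\ell}+u^{q}R$ is only a cleaner bookkeeping of the paper's fully expanded Equation~(\ref{b6}), and your case split is the same as the paper's Cases 1--2 with its subcases.
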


\begin{proof}
Each $\mathcal{C}^{\ell}$ is a cyclic $(n,2k-2,k)_q$-CDC of size $(\lceil \frac{n}{2rk} \rceil-1)(q^{k}-1)^{r}(q^{n}-1)$ by Lemma \ref{Main}. To show that $\mathcal{C}$ has minimum distance $2k-2$, it remains to show that $\dim(\alpha U^{\ell}_{i_{1},i_{2},\ldots,i_{r},b,j}\cap U^{\ell'}_{i_{1}',i_{2}',\ldots,i_{r}',b',j'})\leq 1$ for any $\alpha \in \mathbb{F}_{q^{n}}^{*}$, $1\leq \ell< \ell'\leq r$, $0\leq i_{m},i_{m}'\leq q^{k}-2$ $(1\leq m \leq r)$, $0\leq b,b'\leq q-2$, and $1\leq j,j'\leq e$. By Lemma \ref{dist}, it suffices to show that for any nonzero elements $\bar{u}_{1},\bar{u}_{2}\in U^{\ell}_{i_{1},i_{2},\ldots,i_{r},b,j}$  and nonzero elements $\bar{v}_{1},\bar{v}_{2}\in U^{\ell'}_{i_{1}',i_{2}',\ldots,i_{r}',b',j'}$, the equality $\bar{u}_{1}\cdot \bar{v}_{1}=\bar{u}_{2}\cdot \bar{v}_{2}$ implies $\bar{u}_{1}\mathbb{F}_{q}=\bar{u}_{2}\mathbb{F}_{q}$, $\bar{v}_{1}\mathbb{F}_{q}=\bar{v}_{2}\mathbb{F}_{q}$. 

Let $\bar{u}_{1}, \bar{u}_{2}, \bar{v}_{1}$ and $\bar{v}_{2}$ be of the following form:
\[
\begin{array}{l}
\bar{u}_{1}=u_{1}+u_{1}\xi^{i_{1}}\gamma^{j}+\cdots+(u_{1}^{q}-\xi^{b}u_{1})\xi^{i_{\ell}}\gamma^{\ell j}+\cdots+u_{1}\xi^{i_{r}}\gamma^{rj}, \smallskip \\
\bar{u}_{2}=u_{2}+u_{2}\xi^{i_{1}}\gamma^{j}+\cdots+(u_{2}^{q}-\xi^{b}u_{2})\xi^{i_{\ell}}\gamma^{\ell j}+\cdots+u_{2}\xi^{i_{r}}\gamma^{rj}, \smallskip \\
\bar{v}_{1}=v_{1}+v_{1}\xi^{i_{1}'}\gamma^{j'}+\cdots+(v_{1}^{q}-\xi^{b'}v_{1})\xi^{i_{\ell'}'}\gamma^{\ell' j'}+\cdots+v_{1}\xi^{i_{r}'}\gamma^{rj'}, \smallskip \\
\bar{v}_{2}=v_{2}+v_{2}\xi^{i_{2}'}\gamma^{j'}+\cdots+(v_{2}^{q}-\xi^{b'}v_{2})\xi^{i_{\ell'}'}\gamma^{\ell' j'}+\cdots+v_{2}\xi^{i_{r}'}\gamma^{rj'},
\end{array}
\] 
where $u_{1}$, $u_{2}$, $v_{1}$, $v_{2}$ are nonzero elements of $\mathbb{F}_{q^{k}}$.   Simple computation shows that 
\begin{equation*}
	\begin{split}
		&\quad \ \bar{u}_{1}\cdot \bar{v}_{1}\\
		&=u_1v_1+u_1(v_1^q-\xi^{b'}v_1)\xi^{i_{\ell'}'}\gamma^{\ell' j'}+\sum_{t=1,t\neq \ell'}^{r}u_1v_1\xi^{i_{t}'}\gamma^{tj'}+(u_1^q-\xi^{b}u_1)v_1\xi^{i_{\ell}}\gamma^{\ell j}\\
		&\quad + (u_1^q-\xi^{b}u_1)(v_1^q-\xi^{b'}v_1)\xi^{i_{\ell}+i_{\ell'}'}\gamma^{\ell j+\ell' j'} +\sum_{t=1,t\neq \ell'}^{r}(u_1^q-\xi^{b}u_1)v_1\xi^{i_{\ell}+i_{t}'}\gamma^{\ell j+tj'}\\
		&\quad +\sum_{m=1,m\neq \ell}^{r}u_1v_1\xi^{i_{m}}\gamma^{mj}+\sum_{m=1,m\neq \ell}^{r}u_1(v_1^q-\xi^{b'}v_1)\xi^{i_{m}+i_{\ell'}'}\gamma^{mj+\ell' j'}\\
            &\quad +\sum_{m=1,m\neq \ell}^r\sum_{t=1,t\neq \ell'}^{r}u_1v_1\xi^{i_{m}+i_{t}'}\gamma^{mj+tj'}.
	\end{split}
\end{equation*}
Since $1,\gamma,\ldots,\gamma^{2r\max\{j,j'\}}$ are linearly independent over $\mathbb{F}_{q^k}$, comparing coefficients of $\bar{u}_{1}\cdot \bar{v}_{1}$ and $\bar{u}_{2}\cdot \bar{v}_{2}$ after expansion gives $u_1v_1=u_2v_2$, and the equality $\bar{u}_{1}\cdot \bar{v}_{1}=\bar{u}_{2}\cdot \bar{v}_{2}$ can be simplified as follows:
\begin{equation}\label{b6}
\begin{split}
&\quad u_1v_1^q\xi^{i_{\ell'}'}\gamma^{\ell' j'}+u_1^qv_1\xi^{i_{\ell}}\gamma^{\ell j}-(u_1^qv_1\xi^{b'}+u_1v_1^q\xi^{b})\xi^{i_{\ell}+i_{\ell'}'}\gamma^{\ell j+\ell' j'}\\ &\quad +\sum_{t=1,t\neq \ell'}^{r}u_1^qv_1\xi^{i_{\ell}+i_{t}'}\gamma^{\ell j+tj'}+\sum_{m=1,m\neq \ell}^{r}u_1v_1^q\xi^{i_{m}+i_{\ell'}'}\gamma^{mj+\ell' j'}\\
&= u_2v_2^q\xi^{i_{\ell'}'}\gamma^{\ell' j'}+u_2^qv_2\xi^{i_{\ell}}\gamma^{\ell j}-(u_2^qv_2\xi^{b'}+u_2v_2^q\xi^{b})\xi^{i_{\ell}+i_{\ell'}'}\gamma^{\ell j+\ell' j'}\\ &\quad +\sum_{t=1,t\neq \ell'}^{r}u_2^qv_2\xi^{i_{\ell}+i_{t}'}\gamma^{\ell j+tj'}+\sum_{m=1,m\neq \ell}^{r}u_2v_2^q\xi^{i_{m}+i_{\ell'}'}\gamma^{mj+\ell' j'}.
	\end{split}
\end{equation}
We distinguish the following two cases


Case 1: $j\leq j'$.

Since $\ell j< \min \{\ell j+xj'\colon 1\leq x\leq r\}$ and  $\ell j<\min \{\ell' j'+yj\colon 0\leq y\leq r\}$, comparing coefficients of $\gamma^{\ell j}$ in Equation (\ref{b6}) gives $u_{1}^{q}v_{1}=u_{2}^{q}v_{2}$.
Since $u_1v_1=u_2v_2$, we deduce that $\frac{v_{1}}{v_{2}}=(\frac{u_{2}}{u_{1}})^{q}=(\frac{v_{1}}{v_{2}})^{q}$, thereby $\frac{v_{1}}{v_{2}}\in \mathbb{F}_{q}^{*}$. It follows that $\bar{u}_{1}\mathbb{F}_{q}=\bar{u}_{2}\mathbb{F}_{q}$ and $\bar{v}_{1}\mathbb{F}_{q}=\bar{v}_{2}\mathbb{F}_{q}$, as desired.

Case 2: $j> j'$. 

 If $\ell j\notin  \{\ell'j'+yj\colon 0\leq y\leq r\}$, comparing coefficients of $\gamma^{\ell j}$ in Equation (\ref{b6}) gives $u_{1}^{q}v_{1}=u_{2}^{q}v_{2}$.
 Since $u_1v_1=u_2v_2$, we deduce that $\frac{v_{1}}{v_{2}}=(\frac{u_{2}}{u_{1}})^{q}=(\frac{v_{1}}{v_{2}})^{q}$, thereby $\frac{v_{1}}{v_{2}}\in \mathbb{F}_{q}^{*}$. It follows that $\bar{u}_{1}\mathbb{F}_{q}=\bar{u}_{2}\mathbb{F}_{q}$ and $\bar{v}_{1}\mathbb{F}_{q}=\bar{v}_{2}\mathbb{F}_{q}$, as desired.

 If $\ell j=yj+\ell'j'$ for some $y$ with $0\leq y\leq \ell-1$, then $\ell'j'=(\ell-y)j$. We distinguish the following two subcases.

Subcase (i): $y\neq 0$.

Since $\ell'j'<\min\{\ell j+xj'\colon 0\leq x\leq r\}$ and $\ell' j'<\min \{\ell'j'+zj\colon 1\leq z\leq r\}$,  comparing coefficients of $\gamma^{\ell' j'}$ in Equation (\ref{b6}) gives $u_{1}v_{1}^{q}=u_{2}v_{2}^{q}$.
It follows from $u_{1}v_{1}=u_{2}v_{2}$ that $\frac{u_{1}}{u_{2}}=(\frac{v_{2}}{v_{1}})^{q}=(\frac{u_{1}}{u_{2}})^{q}$. Then $\frac{u_{2}}{u_{1}}\in \mathbb{F}_{q}^{*}$, which implies that $\bar{v}_{1}\mathbb{F}_{q}=\bar{v}_{2}\mathbb{F}_{q}$ and $\bar{u}_{1}\mathbb{F}_{q}=\bar{u}_{2}\mathbb{F}_{q}$, as desired.

Subcase (ii): $y=0$. 

Note that $y=0$ implies $\ell j=\ell'j'$. Since $\ell j+j'>\ell j=\ell'j'$,  $\ell j+j'< \min\{\ell j+xj'\colon 2\leq x\leq r\}$ and $\ell j+j'< \min \{\ell'j'+zj\colon 1\leq z\leq r\}$, comparing coefficient of $\gamma^{\ell j+j'}$ in Equation (\ref{b6}) gives $v_{1}u_{1}^{q}=v_{2}u_{2}^{q}$.
It follows from $u_{1}v_{1}=u_{2}v_{2}$ that $\frac{v_{1}}{v_{2}}=(\frac{u_{2}}{u_{1}})^{q}=(\frac{v_{1}}{v_{2}})^{q}$. Then $\frac{u_{2}}{u_{1}}\in \mathbb{F}_{q}^{*}$, which implies that $\bar{u}_{1}\mathbb{F}_{q}=\bar{u}_{2}\mathbb{F}_{q}$ and $\bar{v}_{1}\mathbb{F}_{q}=\bar{v}_{2}\mathbb{F}_{q}$, as desired.

To sum up, we have shown that $\mathcal{C}$ has minimum distance $2k-2$. In particular, we have shown that the $r(\lceil \frac{n}{2rk} \rceil-1)(q^{k}-1)^{r}(q^{n}-1)$ elements of $\mathcal{C}$ are distinct, so $|\mathcal{C}|=r(\lceil \frac{n}{2rk} \rceil-1)(q^{k}-1)^{r}(q^{n}-1)$. \qed
\end{proof}

Similar discussion to the proofs of Lemmas \ref{Main} and \ref{CSC-C}, we have cheaked that $\mathcal{D}^{\ell}$ and $\mathcal{D}$ have the same properties as $\mathcal{C}^{\ell}$ and $\mathcal{C
}$, respectively.
\begin{Lemma}\label{csc}
Let $\mathcal{D}^{\ell}$ be defined as in Construction $\ref{Construction}$. Then $\mathcal{D}^{\ell}$ is a cyclic $(n,2k-2,k)_q$-CDC of size $(\lceil \frac{n}{2rk} \rceil-1)\frac{(q^{k}-1)^{r-1}(q^{n}-1)}{q-1}$.
\end{Lemma}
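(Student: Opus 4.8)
The plan is to mirror the proof of Lemma \ref{Main}, exploiting the fact that $V^{\ell}_{i_1,\ldots,i_r,j}$ defined in (\ref{Equ2}) is structurally the same as $U^{\ell}_{i_1,\ldots,i_r,b,j}$ but with the $\ell$-th coordinate simplified to $v^q$ (no factor $\xi^{i_\ell}$ and no subtraction $\xi^b v$). First I would invoke Lemma \ref{NSidon} to see that each $V^{\ell}_{i_1,\ldots,i_r,j}$ is a Sidon space, so by Lemma \ref{sac} each $\mathcal{D}^{\ell}_{i_1,\ldots,i_r,j}$ is an optimal full-length orbit code, i.e. a cyclic $(n,2k-2,k)_q$-CDC of size $\frac{q^{n}-1}{q-1}$. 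It then remains to control inter-orbit intersections, namely to show $\dim(\alpha V^{\ell}_{i_1,\ldots,i_r,j}\cap V^{\ell}_{i_1',\ldots,i_r',j'})\leq 1$ for every $\alpha\in\mathbb{F}_{q^{n}}^{*}$ and every pair of distinct admissible tuples, which by Lemma \ref{dist} reduces to showing that $\bar{u}_1\bar{v}_1=\bar{u}_2\bar{v}_2$ forces $\bar{u}_1\mathbb{F}_q=\bar{u}_2\mathbb{F}_q$ and $\bar{v}_1\mathbb{F}_q=\bar{v}_2\mathbb{F}_q$, where $\bar{u}_t\in V^{\ell}_{i_1,\ldots,i_r,j}$ and $\bar{v}_t\in V^{\ell}_{i_1',\ldots,i_r',j'}$ arise from $u_t,v_t\in\mathbb{F}_{q^{k}}^{*}$.

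The computational core runs exactly as in Lemma \ref{Main}. Expanding the product and using that $1,\gamma,\ldots,\gamma^{2r\max\{j,j'\}}$ are linearly independent over $\mathbb{F}_{q^{k}}$ (guaranteed by $2re\leq \frac{n}{k}-1$), the constant term gives $u_1v_1=u_2v_2$, whence $u_1^qv_1^q=u_2^qv_2^q$; these two identities cancel all pure $uv$- and $u^qv^q$-contributions and leave a simplified identity in the mixed monomials $u^qv$ and $uv^q$. This identity is precisely (\ref{b0}) specialized to $b=b'=0$ and $\xi^{i_\ell}=\xi^{i_\ell'}=1$, with the $\gamma^{\ell j+\ell j'}$ cross term now absent. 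I would then split into two cases. For $j\neq j'$ (say $j<j'$), the monomial $\gamma^{\ell j}$ is the unique lowest mixed term, so comparing its coefficient yields $u_1^qv_1=u_2^qv_2$; together with $u_1v_1=u_2v_2$ this gives $v_1/v_2\in\mathbb{F}_q^{*}$ and the claim.

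For $j=j'$, I would note first that since $i_\ell$ does not occur in $V^{\ell}$, distinctness of the tuples forces $i_y\neq i_y'$ for some $y\neq \ell$. With $j=j'$ the two standalone terms $\gamma^{\ell j}$ and $\gamma^{\ell j'}$ merge and the two sums merge pairwise, so comparing coefficients gives $u_1^qv_1+u_1v_1^q=u_2^qv_2+u_2v_2^q$ together with, for each $t\neq\ell$, the relation $u_1^qv_1\xi^{i_t'}+u_1v_1^q\xi^{i_t}=u_2^qv_2\xi^{i_t'}+u_2v_2^q\xi^{i_t}$. Writing $a=u_1^qv_1-u_2^qv_2$ and $c=u_1v_1^q-u_2v_2^q$, these read $a+c=0$ and $a\xi^{i_y'}+c\xi^{i_y}=0$; since $\xi$ is primitive and $i_y\neq i_y'$ force $\xi^{i_y}\neq\xi^{i_y'}$, the only solution is $a=c=0$, i.e. $u_1^qv_1=u_2^qv_2$, and with $u_1v_1=u_2v_2$ one gets $u_1/u_2\in\mathbb{F}_q^{*}$. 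This establishes the minimum distance $2k-2$.

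Finally, for the cardinality I would observe that $V^{\ell}_{i_1,\ldots,i_r,j}$ is independent of the index $i_\ell$, so the genuinely free parameters are $j$ (with $e=\lceil \frac{n}{2rk}\rceil-1$ choices) and the $r-1$ indices $i_m$ with $m\neq\ell$ (with $q^{k}-1$ choices each); combined with the orbit size $\frac{q^{n}-1}{q-1}$ this yields the claimed count $e(q^{k}-1)^{r-1}\frac{q^{n}-1}{q-1}$, the distinctness of all these orbits being exactly what the distance computation certifies. The step I would treat as the main obstacle is the bookkeeping in the case $j=j'$: verifying which powers of $\gamma$ collide once $j=j'$ so that the correct pair of coefficient equations is extracted, and confirming that dropping $i_\ell$ from the parameter list is precisely what produces the factor $(q^{k}-1)^{r-1}$ rather than $(q^{k}-1)^{r}$.
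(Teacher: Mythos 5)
Your proposal is correct and follows exactly the route the paper intends: the authors omit the proof of Lemma \ref{csc}, stating only that it is a ``similar discussion'' to Lemmas \ref{Main} and \ref{CSC-C}, and your argument is precisely that adaptation, including the two genuinely new points one must check for $V^{\ell}$ --- that the absence of the $\gamma^{\ell j+\ell j'}$ cross term changes the coefficient bookkeeping in the case $j=j'$ (your $a+c=0$, $a\xi^{i_y'}+c\xi^{i_y}=0$ argument handles this correctly), and that $i_\ell$ is a dummy parameter, which accounts for the factor $(q^k-1)^{r-1}$ and for why distinctness of tuples yields some $i_y\neq i_y'$ with $y\neq\ell$.
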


\begin{Lemma}\label{CSC-D}
Let $\mathcal{D}$ be defined as in Construction $\ref{Construction}$. Then $\mathcal{D}$ is a cyclic $(n,2k-2,k)_q$-CDC  of size $r(\lceil \frac{n}{2rk} \rceil-1)\frac{(q^{k}-1)^{r-1}(q^{n}-1)}{q-1}$.
\end{Lemma}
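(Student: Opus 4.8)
The plan is to mirror the proof of Lemma \ref{CSC-C}, since $\mathcal{D}$ is built from the spaces $V^{\ell}$ in exactly the same way that $\mathcal{C}$ is built from the $U^{\ell}$, only with the simpler $\ell$-th coordinate $v^{q}\gamma^{\ell j}$ in place of $(u^{q}-\xi^{b}u)\xi^{i_{\ell}}\gamma^{\ell j}$. First I would invoke Lemma \ref{csc}, which already gives that each $\mathcal{D}^{\ell}$ is a cyclic $(n,2k-2,k)_q$-CDC of size $(\lceil\frac{n}{2rk}\rceil-1)\frac{(q^{k}-1)^{r-1}(q^{n}-1)}{q-1}$. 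Thus it only remains to control the intersections \emph{between} distinct blocks, i.e. to prove $\dim(\alpha V^{\ell}_{i_{1},\dots,i_{r},j}\cap V^{\ell'}_{i_{1}',\dots,i_{r}',j'})\le 1$ for all $\alpha\in\mathbb{F}_{q^{n}}^{*}$, all $1\le \ell<\ell'\le r$, and all admissible indices. By Lemma \ref{dist}, this is equivalent to showing that whenever $\bar u_{1}\bar v_{1}=\bar u_{2}\bar v_{2}$ with nonzero $\bar u_{1},\bar u_{2}\in V^{\ell}_{i_{1},\dots,i_{r},j}$ and nonzero $\bar v_{1},\bar v_{2}\in V^{\ell'}_{i_{1}',\dots,i_{r}',j'}$, one has $\bar u_{1}\mathbb{F}_{q}=\bar u_{2}\mathbb{F}_{q}$ and $\bar v_{1}\mathbb{F}_{q}=\bar v_{2}\mathbb{F}_{q}$.

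Next I would expand $\bar u_{1}\bar v_{1}$ and $\bar u_{2}\bar v_{2}$ as in Lemma \ref{CSC-C}. Because $1\le j,j'\le e$ and $2re\le\frac{n}{k}-1$, the powers $1,\gamma,\dots,\gamma^{2r\max\{j,j'\}}$ are $\mathbb{F}_{q^{k}}$-linearly independent, so I may compare coefficients monomial by monomial. The constant term yields $u_{1}v_{1}=u_{2}v_{2}$; raising this to the $q$-th power kills the unique two-$q$-power term $u^{q}v^{q}\gamma^{\ell j+\ell'j'}$, and the ``pure product'' monomials (those with coefficient proportional to $u_{i}v_{i}$) cancel because $\bar u_{1},\bar u_{2}$ share the parameters $i_{m}$ and $\bar v_{1},\bar v_{2}$ share the parameters $i_{m}'$. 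What survives is the $\mathcal{D}$-analogue of Equation (\ref{b6}), carrying only the single-$q$-power terms $u_{i}^{q}v_{i}\xi^{\cdots}\gamma^{\ell j+tj'}$ (the standalone $\gamma^{\ell j}$ coming from the constant part of $\bar v_{i}$) and $u_{i}v_{i}^{q}\xi^{\cdots}\gamma^{mj+\ell'j'}$ (the standalone $\gamma^{\ell'j'}$ coming from the constant part of $\bar u_{i}$); note that the $b,b'$ cross term appearing in (\ref{b6}) simply does not occur here.

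I would then split into the same cases as Lemma \ref{CSC-C}. If $j\le j'$, then $\ell j$ is strictly smaller than every exponent $\ell j+xj'$ ($x\ge1$) and every $\ell'j'+yj$ ($y\ge0$), so $\gamma^{\ell j}$ is isolated and comparison gives $u_{1}^{q}v_{1}=u_{2}^{q}v_{2}$; together with $u_{1}v_{1}=u_{2}v_{2}$ this forces $v_{1}/v_{2}=(u_{2}/u_{1})^{q}=(v_{1}/v_{2})^{q}\in\mathbb{F}_{q}^{*}$, whence the two scalar identities follow. If $j>j'$ and $\ell j\notin\{\ell'j'+yj:0\le y\le r\}$, the same isolation of $\gamma^{\ell j}$ applies. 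Otherwise $\ell j=yj+\ell'j'$ for some $0\le y\le\ell-1$: when $y\neq0$ I isolate $\gamma^{\ell'j'}$ to obtain $u_{1}v_{1}^{q}=u_{2}v_{2}^{q}$, and when $y=0$ (so $\ell j=\ell'j'$) I isolate $\gamma^{\ell j+j'}$, whose coefficient is the single term $u_{i}^{q}v_{i}\xi^{i_{1}'}$ (valid since $1\neq\ell'$), to obtain $u_{1}^{q}v_{1}=u_{2}^{q}v_{2}$. In each situation the derived relation combines with $u_{1}v_{1}=u_{2}v_{2}$ to place the relevant ratio in $\mathbb{F}_{q}^{*}$. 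This establishes $d(\mathcal{D})=2k-2$; in particular the listed codewords are pairwise distinct, so the blocks $\mathcal{D}^{\ell}$ are disjoint and $|\mathcal{D}|=r(\lceil\frac{n}{2rk}\rceil-1)\frac{(q^{k}-1)^{r-1}(q^{n}-1)}{q-1}$.

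I expect the only delicate point to be the tie-breaking in the subcase $\ell j=\ell'j'$ of Case $2$: there $\gamma^{\ell j}$ receives contributions from both a $u^{q}v$ term and a $uv^{q}$ term, so it cannot be used directly, and one must verify that $\gamma^{\ell j+j'}$ is genuinely isolated, i.e. that $\ell j+j'$ differs from every $\ell j+xj'$ ($x\ge2$) and every $\ell'j'+zj$ ($z\ge1$), which holds precisely because $j>j'$ and $\ell'\ge2$. Everything else is the routine bookkeeping already carried out for $\mathcal{C}$.
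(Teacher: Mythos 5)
Your proposal is correct and follows exactly the route the paper intends: the paper omits the proof of this lemma, stating only that it is analogous to Lemmas \ref{Main} and \ref{CSC-C}, and your argument is precisely that analogue carried out for the $V^{\ell}$ spaces, including the correct observation that the $\xi^{b},\xi^{b'}$ cross term of Equation (\ref{b6}) disappears and the careful tie-breaking via $\gamma^{\ell j+j'}$ in the subcase $\ell j=\ell' j'$ (where $\ell'\geq 2$ guarantees the $t=1$ term exists). Nothing is missing.
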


We are in a position to prove Theorem \ref{Con1}.

\noindent {\bf{Proof of Theorem \ref{Con1}}:} By Lemmas \ref{CSC-C} and \ref{CSC-D}, 
to show that $\mathcal{C}\cup \mathcal{D}$ has minimum distance $2k-2$, it remains to show that $\dim(\alpha U^{\ell}_{i_{1},i_{2},\ldots,i_{r},b,j}\cap V^{\ell'}_{i_{1}',i_{2}',\ldots,i_{r}',j'})\leq 1$ for any $\alpha \in \mathbb{F}_{q^{n}}^{*}$, $0\leq i_m,i_m'\leq q^k-2$, $1\leq m\leq r$, $0\leq b\leq q-2$ and $1\leq j,j'\leq \lceil \frac{n}{2rk} \rceil-1$. By Lemma \ref{dist}, it suffices to show that for any nonzero elements $\bar{u}_{1},\bar{u}_{2}\in U^{\ell}_{i_{1},i_{2},\ldots,i_{r},b,j}$  and nonzero elements $\bar{v}_{1},\bar{v}_{2}\in V^{\ell'}_{i_{1}',i_{2}',\ldots,i_{r}',j'}$, the equality $\bar{u}_{1}\cdot \bar{v}_{1}=\bar{u}_{2}\cdot \bar{v}_{2}$ implies $\bar{u}_{1}\mathbb{F}_{q}=\bar{u}_{2}\mathbb{F}_{q}$, $\bar{v}_{1}\mathbb{F}_{q}=\bar{v}_{2}\mathbb{F}_{q}$. 
Let $\bar{u}_{1}, \bar{u}_{2}, \bar{v}_{1}$ and $\bar{v}_{2}$ be of the following form:
\[
\begin{array}{l}
\bar{u}_{1}=u_1+(u_1^{q}-\xi^{b}u_1)\xi^{i_{\ell}}\gamma^{\ell j}+\sum_{m=1,m\neq \ell}^ru_1\xi^{i_{m}}\gamma^{mj}, \smallskip \\		
\bar{u}_{2}=u_2+(u_2^{q}-\xi^{b}u_2)\xi^{i_{\ell}}\gamma^{\ell j}+\sum_{m=1,m\neq \ell}^ru_2\xi^{i_{m}}\gamma^{mj}, \smallskip \\
\bar{v}_{1}=v_{1}+v_{1}^{q}\gamma^{\ell' j'}+\sum_{m=1,m\neq \ell'}^rv_{1}\xi^{i_{m}'}\gamma^{mj'}, \smallskip \\	
\bar{v}_{2}=v_{2}+v_{2}^{q}\gamma^{\ell' j'}+\sum_{m=1,m\neq \ell'}^rv_{2}\xi^{i_{m}'}\gamma^{mj'},
\end{array}
\] 
where $u_{1}$, $u_{2}$, $v_{1}$, $v_{2}$ are nonzero elements of $\mathbb{F}_{q^{k}}$. 
Since $1,\gamma,\ldots,\gamma^{2r\max\{j,j'\}}$ are linearly independent over $\mathbb{F}_{q^k}$, comparing coefficients of $\bar{u}_{1}\cdot \bar{v}_{1}$ and $\bar{u}_{2}\cdot \bar{v}_{2}$ after expansion gives $u_1v_1=u_2v_2$, and the equality $\bar{u}_{1}\cdot \bar{v}_{1}=\bar{u}_{2}\cdot \bar{v}_{2}$ can be simplified as follows:
\begin{equation}\label{b10}
	\begin{split}
		&\quad u_1v_1^q\gamma^{\ell' j'}+u_1^qv_1\xi^{i_{\ell}}\gamma^{\ell j} -\xi^{b}u_1v_1^q\xi^{i_{\ell}}\gamma^{\ell j+\ell'j'}\\
		& \quad +\sum_{m=1,m\neq \ell'}^{r}u_1^qv_1\xi^{i_{\ell}+i_{m}'}\gamma^{\ell j+mj'}+\sum_{m=1,m\neq \ell}^r u_1v_1^q\xi^{i_{m}}\gamma^{mj+\ell' j'}\\
		&= u_2v_2^q\gamma^{\ell' j'}+u_2^qv_2\xi^{i_{\ell}}\gamma^{\ell j} -\xi^{b}u_2v_2^q\xi^{i_{\ell}}\gamma^{\ell j+\ell'j'}\\
		& \quad +\sum_{m=1,m\neq \ell'}^ru_2^qv_2\xi^{i_{\ell}+i_{m}'}\gamma^{\ell j+mj'}+\sum_{m=1,m\neq \ell}^r u_2v_2^q\xi^{i_{m}}\gamma^{mj+\ell' j'}\\
	\end{split}
\end{equation}  

Notice that $\ell j+\ell'j'\not \in \{\ell j+mj'\colon 0\leq m\leq r,m\neq \ell'\}\cup \{mj+\ell'j'\colon 0\leq m\leq r,m\neq \ell\}$. Comparing coefficients of $\gamma^{\ell j+\ell'j'}$ in Equatity (\ref{b10}) gives $u_{1}v_{1}^q=u_{2}v_{2}^q$.
Since $u_1v_1=u_2v_2$, we deduce that $\frac{v_{1}}{v_{2}}=\frac{u_{2}}{u_{1}}=(\frac{v_{1}}{v_{2}})^{q}$, thereby $\frac{v_{1}}{v_{2}}\in \mathbb{F}_{q}^{*}$. It follows that $\bar{u}_{1}\mathbb{F}_{q}=\bar{u}_{2}\mathbb{F}_{q}$ and $\bar{v}_{1}\mathbb{F}_{q}=\bar{v}_{2}\mathbb{F}_{q}$, as desired.

To sum up, we have shown that  $\mathcal{C}\cup \mathcal{D}$ has minimum distance $2k-2$. In particular, we have shown that the $r(\lceil \frac{n}{2rk} \rceil-1)((q^{k}-1)^{r}(q^{n}-1)+\frac{(q^{k}-1)^{r-1}(q^{n}-1)}{q-1})$ elements of  $\mathcal{C}\cup \mathcal{D}$ are distinct, so $| \mathcal{C}\cup \mathcal{D}|=r(\lceil \frac{n}{2rk} \rceil-1)((q^{k}-1)^{r}(q^{n}-1)+\frac{(q^{k}-1)^{r-1}(q^{n}-1)}{q-1})$. \qed

\begin{Remark}
In the case $n=5k$, the construction in {\rm \cite{FW2021}} yields a cyclic $(n,2k-2,k)_q$-CDC of size $2q^{k}\cdot \frac{q^{n}-1}{q-1}$,  the construction in {\rm \cite{NXG2022}} yields a cyclic $(n,2k-2,k)_q$-CDC of size $4q^{k}\cdot \frac{q^{n}-1}{q-1}$, while Theorem $\ref{Con1}$ gives a cyclic   $(n,2k-2,k)_q$-CDC of size  $2((q^{k}-1)^{2}(q^{n}-1)+\frac{(q^{k}-1)(q^{n}-1)}{q-1})$, which is larger. 
\end{Remark}

\section{The second construction}

In this section, we use irreducible polynomials and a variant of Sidon spaces in \cite{FW2021} to present a construction of cyclic CDCs. 

\begin{Lemma}
Let $q$ be a prime power, $k$ a positive integer, $c$ a nonzero element of  $\mathbb{F}_{q^{k}}$, and $\xi$  a primitive element of $\mathbb{F}_{q^{k}}$.   Let $A\subset \{0,1,\ldots,q^{k}-2\}$ be a set satisfying $c\xi^{i+j}\neq 1$ for any $i,j\in A$. Then $|A|=\lfloor \frac{q^{k}-2}{2}\rfloor$.
\end{Lemma}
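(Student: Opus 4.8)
The plan is to convert the multiplicative constraint into an additive one inside the cyclic group $\mathbb{Z}_{N}$ with $N:=q^{k}-1$, and then solve the resulting extremal set problem by a pairing argument. Since $\xi$ is a primitive element of $\mathbb{F}_{q^{k}}$ it has multiplicative order $N$, so I may write $c=\xi^{s}$ for a unique $s\in\{0,1,\dots,N-1\}$. For exponents $i,j$ one has $c\xi^{i+j}=\xi^{s+i+j}$, and since $\xi^{m}=1$ exactly when $m\equiv 0\pmod{N}$, the condition $c\xi^{i+j}\neq 1$ is equivalent to $i+j\not\equiv w\pmod{N}$, where $w:=(-s)\bmod N$. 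Identifying $\{0,1,\dots,q^{k}-2\}$ with $\mathbb{Z}_{N}$, the question becomes: what is the largest $A\subseteq\mathbb{Z}_{N}$ containing no two (not necessarily distinct) elements whose sum is congruent to $w$?

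Next I would run the pairing argument. Consider the involution $\sigma(x)=w-x$ on $\mathbb{Z}_{N}$. The forbidden condition is precisely that $A$ contains no fixed point of $\sigma$ (a fixed point is an $x$ with $2x\equiv w$, i.e. the self-sum $x+x\equiv w$ is disallowed) and at most one element from each two-element orbit $\{x,\sigma(x)\}$. Writing $f$ for the number of fixed points, $\mathbb{Z}_{N}$ decomposes into these $f$ fixed points together with $(N-f)/2$ two-element orbits, so every admissible $A$ satisfies $|A|\le (N-f)/2$; conversely, selecting one representative from each orbit yields an admissible set of exactly that size, since two distinct representatives lie in different orbits and hence cannot sum to $w$. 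Thus the maximum size is exactly $(N-f)/2$.

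It then remains to evaluate $f=\#\{x\in\mathbb{Z}_{N}\colon 2x\equiv w\pmod{N}\}$ and to match $(N-f)/2$ with $\lfloor\frac{q^{k}-2}{2}\rfloor=\lfloor\frac{N-1}{2}\rfloor$; this parity bookkeeping is the only real obstacle, the reduction and pairing being routine. If $q$ is even then $N$ is odd, $2$ is a unit modulo $N$, so $f=1$ and $(N-f)/2=(N-1)/2=\lfloor\frac{N-1}{2}\rfloor$. If $q$ is odd then $N$ is even, the congruence $2x\equiv w$ is solvable exactly when $w$ is even and then has $f=2$ solutions, giving $(N-f)/2=(N-2)/2=\lfloor\frac{N-1}{2}\rfloor$; here I would record that the relevant choice of $c$ makes $w$ even (equivalently $c$ is a square, so that $s$, and hence $w$, is even), which is what forces $f=2$. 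In both regimes the maximum equals $\lfloor\frac{q^{k}-2}{2}\rfloor$, as claimed.

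For the application to the code size it suffices to have an explicit admissible set of the stated cardinality, which I would exhibit directly and which is robust for every $c$: take a half-length arc $A=\{a,a+1,\dots,a+m-1\}$ of $m=\lfloor\frac{q^{k}-2}{2}\rfloor$ consecutive residues. Its set of pairwise and self sums is the arc $\{2a,2a+1,\dots,2a+2m-2\}$, of length $2m-1\le N-1$, hence a proper sub-arc of $\mathbb{Z}_{N}$ whose complement is nonempty; choosing $a$ so that this sum-arc is translated off the single value $w$ makes the verification that no sum equals $w$ immediate, and the resulting $A$ has the required size $\lfloor\frac{q^{k}-2}{2}\rfloor$.
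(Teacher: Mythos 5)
Your reduction to the additive condition $i+j\not\equiv w\pmod{q^k-1}$ and the subsequent pairing/independence argument is exactly the route the paper takes; the paper builds the same graph ($i\sim j$ iff $i+j+m\equiv 0\pmod{q^k-1}$) and then simply asserts that its independence number is $\lfloor\frac{q^k-2}{2}\rfloor$, whereas you actually carry out the fixed-point count. Your bookkeeping is correct, and it in fact exposes that the asserted value is not right in every case: when $q$ is odd (so $N=q^k-1$ is even) and $w$ is odd, the involution $\sigma(x)=w-x$ has no fixed points, so the maximum admissible set has size $N/2=\frac{q^k-1}{2}$, which exceeds $\lfloor\frac{q^k-2}{2}\rfloor$ by one. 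The one genuinely unjustified step in your write-up is the parenthetical claim that ``the relevant choice of $c$ makes $w$ even (equivalently $c$ is a square)'': nothing in the lemma's hypotheses forces this, and in the application $c=f(0)$ is the norm $N_{\mathbb{F}_{q^n}/\mathbb{F}_{q^k}}(\gamma)$ of a root of an irreducible polynomial of degree $2r$, which is a non-square in $\mathbb{F}_{q^k}$ whenever $\gamma$ is a non-square in $\mathbb{F}_{q^n}$ (e.g.\ when $\gamma$ is primitive). So you cannot record that fact; you should instead record that in this exceptional case the maximum is $\frac{q^k-1}{2}$ and the lemma's equality should be read as ``at least.''

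This discrepancy is harmless for the paper, because Construction 3.2 only needs the \emph{existence} of an admissible $A$ of size $\lfloor\frac{q^k-2}{2}\rfloor$, and your final paragraph supplies exactly that: the arc $\{a,\dots,a+m-1\}$ with sum-arc of length $2m-1\le N-2$ can always be translated off $w$ (when $N$ is even one should note that $2a$ ranges only over even residues, but the complementary arc of forbidden positions has length at least $2$ and therefore contains an even residue, so a valid $a$ exists). With that small remark added, your proposal proves everything the paper actually uses, and does so more carefully than the paper's own one-line assertion.
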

\begin{proof}
Since $c\in \mathbb{F}^{*}_{q^{n}}$ and $\xi$ is a primitive element of $\mathbb{F}_{q^{k}}$, we can write $c=\xi^{m}$ for some $m\in \{0,1,\ldots,q^{k}-2\}$. The equation $c\xi^{i+j}\neq 1$ is simplified as 
\begin{equation*}
i+j+m\not\equiv 0 \pmod  {q^{k}-1}.
\end{equation*}
We construct a graph $G$ with vertex set $\{0,1,\ldots,q^{k}-2\}$ and $i \sim j$ if and only if $i+j+m\equiv 0 \pmod {q^{k}-1}.$ Since $i+j+m\not\equiv 0 \pmod {q^{k}-1}$ for all $i,j\in A$, then the maximum number of $A$ exactly equals to the independent number of $G$. It is easy to see that $\alpha=\lfloor \frac{q^{k}-2}{2}\rfloor$, thus $|A|= \lfloor \frac{q^{k}-2}{2}\rfloor$.
\end{proof}

 \begin{Construction}\label{Construction 2}
Let $k$ and $n$ be positive integers such that $k\geq 2$ and  $n=2rk$, where $r\geq 2$. Let $f(x)$ be an irreducible polynomial over $\mathbb{F}_{q^{k}}$ with degree $2r$. Let $\xi$ be a primitive element of $\mathbb{F}_{q^{k}}$ ,$\gamma$ a root of $f(x)$ and $A\subset \{0,1,\ldots,q^{k}-2\}$ satisfy $f(0)\xi^{i+j}\neq 1$ for any $i,j\in A$.  For $0\leq i_{m}\leq q^{k}-2$ $(1\leq m \leq r-1)$, $ i_{r}\in A$, $1\leq \ell \leq r-1$ and $0\leq b \leq q-2$, define 
\[
\begin{array}{l}
\mathcal{C}^{\ell}_{i_{1},i_{2},\ldots,i_{r},b}:= \{\alpha U^{\ell}_{i_{1},i_{2},\ldots,i_{r},b}\colon \alpha \in \mathbb{F}_{q^{n}}^{*}\},
\end{array}
\] where
\begin{equation}\label{Equ3}
U^{\ell}_{i_{1},i_{2},\ldots,i_{r},b}:=\{u+(u^{q}-\xi^{b}u)\xi^{i_{\ell}}\gamma^{\ell}+\sum\limits_{m=1,m\neq \ell}^ru\xi^{i_{m}}\gamma^{m}  \colon u\in \mathbb{F}_{q^{k}}\}
\end{equation}

Define
\[
\begin{array}{l}
\mathcal{C}:=\bigcup\limits_{0\leq i_{1},\ldots,i_{r-1}\leq q^k-2}\bigcup\limits_{ i_r\in A}\bigcup\limits_{ \ell=1}^{r-1}\bigcup\limits_{b=0}^{q-2}\mathcal{C}^{\ell}_{i_{1},i_{2},\ldots,i_{r},b}.
\end{array}
\] 
\end{Construction}

\begin{Theorem}\label{Con2}
The set $\mathcal{C}$ of subspaces of dimension $k$ defined in Construction $\ref{Construction 2}$ is a cyclic $(n,2k-2,k)_q$-CDC of size $\lfloor \frac{(r-1)(q^{k}-2)(q^{k}-1)^{r-1}(q^{n}-1)}{2}\rfloor$.
\end{Theorem}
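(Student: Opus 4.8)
The plan is to reproduce, in the tighter regime $n=2rk$, the three-step development carried out for Theorem~\ref{Con1} in Lemmas~\ref{NSidon}, \ref{Main} and \ref{CSC-C}. The one genuinely new feature is that now $\{1,\gamma,\dots,\gamma^{2r-1}\}$ is an $\mathbb{F}_{q^k}$-basis of $\mathbb{F}_{q^n}$, so a product of two elements of the subspaces in (\ref{Equ3}), which spreads over the powers $\gamma^0,\dots,\gamma^{2r}$, can no longer be read off coefficient-by-coefficient: the single offending power $\gamma^{2r}$ must be reduced through $f$. I may assume $f$ is monic and write $\gamma^{2r}=-\sum_{t=0}^{2r-1}c_t\gamma^t$ with $c_0=f(0)$, so that reducing a term $a\gamma^{2r}$ contributes $-f(0)\,a$ to the constant coefficient.

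First I would prove the analogue of Lemma~\ref{NSidon}: each $U^{\ell}_{i_1,\dots,i_r,b}$ is a Sidon space, so by Lemma~\ref{sac} every $\mathcal{C}^{\ell}_{i_1,\dots,i_r,b}$ is an optimal full-length orbit code of size $\frac{q^n-1}{q-1}$. Taking nonzero $\bar u_1,\bar v_1,\bar u_2,\bar v_2$ with $\bar u_1\bar v_1=\bar u_2\bar v_2$ and expanding, the power $\gamma^{2r}$ arises only from the product of the two $\gamma^r$-coefficients; because $\ell\le r-1$ those coefficients carry no Frobenius twist, so the $\gamma^{2r}$-coefficient is $u_1v_1\xi^{2i_r}$ on the left and $u_2v_2\xi^{2i_r}$ on the right. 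After reducing $\gamma^{2r}$, the constant term of $\bar u_1\bar v_1-\bar u_2\bar v_2$ equals $(u_1v_1-u_2v_2)(1-f(0)\xi^{2i_r})$, and the defining property $f(0)\xi^{i+j}\ne 1$ of $A$ (with $i=j=i_r\in A$) forces $u_1v_1=u_2v_2$. This is the crux: once $u_1v_1=u_2v_2$ holds, the $\gamma^{2r}$-coefficients on the two sides coincide, the wrap-around contributions cancel, and the remaining identity is exactly the linearly independent situation already treated, so the argument of Lemma~\ref{NSidon} finishes the verification verbatim.

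Next I would establish the cross-orbit distance, namely $\dim(\alpha U^{\ell}_{i_1,\dots,i_r,b}\cap U^{\ell'}_{i_1',\dots,i_r',b'})\le 1$ for distinct parameter tuples and all $\alpha\in\mathbb{F}_{q^n}^{*}$; by Lemma~\ref{dist} this reduces to showing that $\bar u_1\bar v_1=\bar u_2\bar v_2$, with $\bar u_i\in U^{\ell}$ and $\bar v_i\in U^{\ell'}$, forces $\bar u_1\mathbb{F}_q=\bar u_2\mathbb{F}_q$ and $\bar v_1\mathbb{F}_q=\bar v_2\mathbb{F}_q$. The same reduction applies: since $\ell,\ell'\le r-1$, both top coefficients are untwisted, so the $\gamma^{2r}$-coefficient is $u_1v_1\xi^{i_r+i_r'}$ (resp. $u_2v_2\xi^{i_r+i_r'}$), and the reduced constant term gives $(u_1v_1-u_2v_2)(1-f(0)\xi^{i_r+i_r'})=0$; because $i_r,i_r'\in A$, the condition on $A$ again yields $u_1v_1=u_2v_2$, the $\gamma^{2r}$ terms cancel, and I am left comparing the coefficients of $\gamma^{1},\dots,\gamma^{2r-1}$ exactly as in Lemmas~\ref{Main} and \ref{CSC-C}. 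The case split is along $\ell=\ell'$ versus $\ell\ne\ell'$: for $\ell=\ell'$ one uses $b\ne b'$, or $i_\ell\ne i_\ell'$, or some $i_x\ne i_x'$, reading off the coefficients of the relevant $\gamma^{\ell}$-type terms to extract $u_1v_1^q=u_2v_2^q$ and hence $u_1/u_2\in\mathbb{F}_q^{*}$; for $\ell\ne\ell'$ one compares the coefficient of $\gamma^{\ell}$ (when $\ell<\ell'$) or $\gamma^{\ell+\ell'}$ to extract $u_1^qv_1=u_2^qv_2$ and conclude. Note $\ell+\ell'\le 2r-2<2r$, so no further wrap-around interferes.

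Finally I would count. Distinct parameter tuples give disjoint orbits, since if two orbits met then some $\alpha U=U'$ would give $\dim(\alpha U\cap U')=k\ge 2$, contradicting the bound just proved. Hence $|\mathcal{C}|$ equals the number of base subspaces times the common orbit length $\frac{q^n-1}{q-1}$. The base subspaces are indexed by $\ell\in\{1,\dots,r-1\}$, $(i_1,\dots,i_{r-1})\in\{0,\dots,q^k-2\}^{r-1}$, $i_r\in A$ and $b\in\{0,\dots,q-2\}$, giving $(r-1)(q^k-1)^{r-1}|A|(q-1)$ of them; multiplying by $\frac{q^n-1}{q-1}$ and inserting $|A|=\lfloor\frac{q^k-2}{2}\rfloor$ from the preceding lemma yields the stated size. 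I expect the reduction of $\gamma^{2r}$ to be the only real obstacle—every other step parallels Section~2—so the whole difficulty is concentrated in arranging that the constant-coefficient comparison isolates $u_1v_1=u_2v_2$, which is precisely what the set $A$ and the restriction $\ell\le r-1$ are engineered to guarantee.
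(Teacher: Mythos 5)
Your proposal is correct and takes essentially the same route as the paper: both establish the Sidon property and the cross-orbit intersection bound by reducing $\gamma^{2r}$ through $f$, extracting the constant-term identity $(u_1v_1-u_2v_2)(1-f(0)\xi^{i_r+i_r'})=0$ and invoking the defining property of $A$ (and the fact that $\ell\le r-1$ keeps the $\gamma^{r}$-coefficients untwisted) to get $u_1v_1=u_2v_2$, after which the remaining coefficient comparisons and the case analysis on $\ell,\ell',b,i_m$ proceed exactly as in Section 2, and the final count agrees with the paper's. No gaps worth noting.
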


In order to prove Theorem \ref{Con2}, we give a lemma.

\begin{Lemma}\label{NSidon}
The subspaces defined in $(\ref{Equ3})$ are Sidon spaces.
\end{Lemma}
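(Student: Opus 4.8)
The plan is to mirror the proof that $U^{r}_{i_1,\dots,i_r,b,j}$ is a Sidon space in the first construction, the only genuinely new feature being that $n=2rk$ forces $\gamma^{2r}$ to reduce modulo $f$. Fix $\ell\in\{1,\dots,r-1\}$; by symmetry it suffices to treat one value of $\ell$. Take four nonzero elements $\bar u_1,\bar v_1,\bar u_2,\bar v_2\in U^{\ell}_{i_1,\dots,i_r,b}$ arising from $u_1,v_1,u_2,v_2\in\mathbb{F}_{q^k}^{*}$, assume $\bar u_1\bar v_1=\bar u_2\bar v_2$, and aim to show $\{\bar u_1\mathbb{F}_q,\bar v_1\mathbb{F}_q\}=\{\bar u_2\mathbb{F}_q,\bar v_2\mathbb{F}_q\}$. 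The simplifying observation I would use is that every generator of $U^{\ell}_{i_1,\dots,i_r,b}$ can be written as $\bar w=wg+w^{q}h$, where $h=\xi^{i_\ell}\gamma^{\ell}$ and $g=1-\xi^{b+i_\ell}\gamma^{\ell}+\sum_{m\neq\ell}\xi^{i_m}\gamma^{m}$ are fixed elements of $\mathbb{F}_{q^n}$ not depending on $w$; note $h\neq0$, $g\neq0$ (its constant term is $1$), and $\overline{\mu w}=\mu\bar w$ for every $\mu\in\mathbb{F}_q$.

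With this notation the product expands as $\bar u_1\bar v_1=u_1v_1\,g^{2}+(u_1v_1^{q}+u_1^{q}v_1)\,gh+u_1^{q}v_1^{q}\,h^{2}$, and likewise for $\bar u_2\bar v_2$. Since $f$ is irreducible of degree $2r$, the set $1,\gamma,\dots,\gamma^{2r-1}$ is an $\mathbb{F}_{q^k}$-basis of $\mathbb{F}_{q^n}$, so I would compare coefficients in this basis using the reduction $\gamma^{2r}=-\sum_{i=0}^{2r-1}c_i\gamma^{i}$ with $c_0=f(0)$. Because $\ell\le r-1$, the $\gamma$-powers occurring in $gh$ and $h^{2}$ are at most $\gamma^{2r-1}$, so the only term reaching $\gamma^{2r}$ is $g^{2}$, and it does so solely through $\gamma^{r}\cdot\gamma^{r}$ (the coefficient of $\gamma^{r}$ in $g$ is $\xi^{i_r}$ because $r\neq\ell$). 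Hence the coefficient of $\gamma^{0}$ in $\bar u_1\bar v_1$ is $u_1v_1\bigl(1-f(0)\xi^{2i_r}\bigr)$. This is exactly where $i_r\in A$ enters: it guarantees $f(0)\xi^{2i_r}\neq1$, so comparing $\gamma^{0}$-coefficients yields $u_1v_1=u_2v_2$, the analogue of Equation~(\ref{E1}).

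Once $u_1v_1=u_2v_2$ is established, also $u_1^{q}v_1^{q}=u_2^{q}v_2^{q}$, so the $g^{2}$ and $h^{2}$ contributions cancel in $\bar u_1\bar v_1=\bar u_2\bar v_2$, leaving $(u_1v_1^{q}+u_1^{q}v_1)\,gh=(u_2v_2^{q}+u_2^{q}v_2)\,gh$; as $gh\neq0$ this gives $u_1v_1^{q}+u_1^{q}v_1=u_2v_2^{q}+u_2^{q}v_2$, the analogue of Equation~(\ref{E2}). From here I would finish by the same $\lambda=u_1/u_2$ dichotomy used earlier for $U^{r}_{i_1,\dots,i_r,b,j}$: if $\lambda\in\mathbb{F}_q$ then $\bar u_1\mathbb{F}_q=\bar u_2\mathbb{F}_q$ and $\bar v_1\mathbb{F}_q=\bar v_2\mathbb{F}_q$, while if $\lambda\notin\mathbb{F}_q$, substituting $u_1=\lambda u_2$, $v_2=\lambda v_1$ and using $\lambda^{q}\neq\lambda$ forces $u_2v_1^{q}=u_2^{q}v_1$, i.e. $u_2/v_1\in\mathbb{F}_q$, whence $\bar u_2\mathbb{F}_q=\bar v_1\mathbb{F}_q$ and $\bar u_1\mathbb{F}_q=\bar v_2\mathbb{F}_q$. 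Either way $\{\bar u_1\mathbb{F}_q,\bar v_1\mathbb{F}_q\}=\{\bar u_2\mathbb{F}_q,\bar v_2\mathbb{F}_q\}$, so $U^{\ell}_{i_1,\dots,i_r,b}$ is a Sidon space. The only real obstacle relative to the $n>2rk$ case is that $\gamma^{2r}$ is no longer a free basis vector; the decomposition $\bar w=wg+w^{q}h$ together with the defining property $f(0)\xi^{i+j}\neq1$ of $A$ is precisely what neutralises the resulting cross term, after which the argument is routine.
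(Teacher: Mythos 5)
Your proof is correct and follows essentially the same route as the paper: comparing the constant term (after reducing $\gamma^{2r}$ via $f$) to get $u_1v_1=u_2v_2$ from the hypothesis $f(0)\xi^{2i_r}\neq 1$, then extracting $u_1v_1^q+u_1^qv_1=u_2v_2^q+u_2^qv_2$, and finishing with the same $\lambda=u_1/u_2$ dichotomy. Your decomposition $\bar w=wg+w^qh$ is just a cleaner packaging of the paper's step of comparing the coefficient of $\gamma^{2\ell}$ (the paper does $\ell=1$ and compares the $\gamma^2$ coefficient), and it has the minor advantage of treating all $\ell$ uniformly.
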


\begin{proof}
We only prove that $U^{\ell}_{i_{1},i_{2},\ldots,i_{r},b}$ is a Sidon space. The other cases can be proved similarly.
Let
\[
\begin{array}{l}
\bar{u}=u+(u^{q}-\xi^{b}u)\xi^{i_{1}}\gamma+u\xi^{i_{2}}\gamma^{2}+\cdots+ u\xi^{i_{r-1}}\gamma^{(r-1)}+u\xi^{i_{r}}\gamma^{r},\smallskip \\
\bar{v}=v+(v^{q}-\xi^{b}v)\xi^{i_{1}}\gamma+v\xi^{i_{2}}\gamma^{2}+\cdots+ v\xi^{i_{r-1}}\gamma^{(r-1)}+v\xi^{i_{r}}\gamma^{r},\smallskip \\
\bar{s}=s+(s^{q}-\xi^{b}s)\xi^{i_{1}}\gamma+s\xi^{i_{2}}\gamma^{2}+\cdots+ s\xi^{i_{r-1}}\gamma^{(r-1)}+s\xi^{i_{r}}\gamma^{r},\smallskip \\
\bar{t}=t+(t^{q}-\xi^{b}t)\xi^{i_{1}}\gamma+t\xi^{i_{2}}\gamma^{2}+\cdots+ t\xi^{i_{r-1}}\gamma^{(r-1)}+t\xi^{i_{r}}\gamma^{r}
\end{array}
\]  
be distinct nonezero elements of $U^{\ell}_{i_{1},i_{2},\ldots,i_{r},b}$,  where $u$, $v$, $s$, $t$ are nonzero elements of $\mathbb{F}_{q^{k}}$. Suppose that $\bar{u}\bar{v}=\bar{s}\bar{t}$. We need to show that $\{\bar{u}\mathbb{F}_{q},\bar{v}\mathbb{F}_{q}\}=\{\bar{s}\mathbb{F}_{q},\bar{t}\mathbb{F}_{q}\}.$

Since $f(x)$ is an irreducible polynomial over $\mathbb{F}_{q^{k}}$ with degree $2r$ and the constant term $c=f(0)\neq 0$, and $\gamma$ is a root of $f(x)$, comparing the constant term after expansion of $\bar{u}\bar{v}=\bar{s}\bar{t}$ gives 
\begin{equation}\label{E1}
	uv(1-\xi^{2i_{r}}c)=st(1-\xi^{2i_{r}}c).
\end{equation}
According to the property of the set $A$, we have $(1-\xi^{2i_{r}}c)\neq 0$. It follows that $uv=st$.

Since $1$, $\gamma$, $\gamma^{j}$, $\ldots$, $\gamma^{2r-1}$ are linearly independent over $\mathbb{F}_{q^{k}}$,  comparing the coefficients of $\gamma^{2}$ in $\bar{u}\bar{v}=\bar{s}\bar{t}$ after expansion yields 
\begin{equation}\label{Eq2}
      u^{q}v+v^{q}u=s^{q}t+t^{q}s.
\end{equation}
Set $\frac{u}{s}=\lambda$. If $\lambda\in \mathbb{F}_{q}$, then $\frac{\bar{u}}{\bar{s}}, \frac{\bar{v}}{\bar{t}}\in \mathbb{F}_{q}$, thereby $\bar{u}\mathbb{F}_{q}=\bar{s}\mathbb{F}_{q}$ and $\bar{v}\mathbb{F}_{q}=\bar{t}\mathbb{F}_{q}$. If $\lambda \notin \mathbb{F}_{q}$, then after replacing $u,t$ with $\lambda s, \lambda v$ respectively, Equation (\ref{Eq2}) becomes the following: $$\lambda^{q}s^{q}v+\lambda v^{q}s=\lambda s^{q}v+\lambda^{q}v^{q}s.$$
Hence $s^{q}v=sv^{q}$ according to $\lambda^{q}\neq \lambda$, which
 implies that $\frac{s}{v}\in \mathbb{F}_{q}$,  $\bar{s}\mathbb{F}_{q}=\bar{v}\mathbb{F}_{q}$ and $\bar{u}\mathbb{F}_{q}=\bar{t}\mathbb{F}_{q}$.

So, the subspace $U^{\ell}_{i_{1},i_{2},\dots,i_{r}}$ is a Sidon space of dimension $k$ over $\mathbb{F}_{q}$.

We are in a position to prove Theorem \ref{Con2}.

\noindent {\bf{Proof of Theorem \ref{Con2}}:} 
Since each $U^{\ell}_{i_{1},i_{2},\ldots,i_{r},b}$ is a Sidon space by Lemma \ref{NSidon},  each $\mathcal{C}^{\ell}_{i_{1},i_{2},\ldots,i_{r},b}$ is a cyclic $(n,2k-2,k)_q$-CDC of size $\frac{q^{n}-1}{q-1}$ by Lemma \ref{sac}. To show that $\mathcal{C}$ has minimum distance $2k-2$, it remains to show that $\dim(\alpha U^{\ell}_{i_{1},i_{2},\ldots,i_{r},b}\cap U^{\ell'}_{i_{1}',i_{2}',\ldots,i_{r}',b'})\leq 1$ for any $\alpha \in \mathbb{F}_{q^{n}}^{*}$ and $(i_{1},i_{2},\ldots,i_{r},b,\ell)\neq (i_{1}',i_{2}',\ldots,i_{r}',b',\ell')$. By Lemma \ref{dist}, it suffices to show that for any nonzero elements $\bar{u},\bar{s}\in U^{\ell}_{i_{1},i_{2},\ldots,i_{r},b}$  and nonzero elements $\bar{v},\bar{t}\in U^{\ell'}_{i_{1}',i_{2}',\ldots,i_{r}',b'}$, the equality $\bar{u}\cdot \bar{v}=\bar{s}\cdot \bar{t}$ implies $\bar{u}\mathbb{F}_{q}=\bar{s}\mathbb{F}_{q}$ and $\bar{v}\mathbb{F}_{q}=\bar{t}\mathbb{F}_{q}$. 

Let $\bar{u}, \bar{s}, \bar{v}$ and $\bar{t}$ be of the following form:
\[
\begin{array}{l}
\bar{u}=u+(u^{q}-\xi^{b}u)\xi^{i_{\ell}}\gamma^{\ell}+\sum\limits_{m=1,m\neq \ell}^ru\xi^{i_{m}}\gamma^{m},\smallskip \\
\bar{v}=v+(v^{q}-\xi^{b'}v)\xi^{i_{\ell'}'}\gamma^{\ell'}+\sum\limits_{m=1,m\neq \ell}^rv\xi^{i_{m}'}\gamma^{m},\smallskip \\
\bar{s}=s+(s^{q}-\xi^{b}s)\xi^{i_{\ell}}\gamma^{\ell}+\sum\limits_{m=1,m\neq \ell}^rs\xi^{i_{m}}\gamma^{m},\smallskip \\
\bar{t}=t+(t^{q}-\xi^{b'}t)\xi^{i_{\ell'}'}\gamma^{\ell'}+\sum\limits_{m=1,m\neq \ell}^rt\xi^{i_{m}'}\gamma^{m},
\end{array}
\] 
where $u$, $v$, $s$, $t$ are nonzero elements of $\mathbb{F}_{q^{k}}$.   Simple computation shows that 
\begin{equation}\label{Equ4}
	\begin{split}
		&\quad \ \bar{u}\cdot \bar{v}\\
		&=uv+u(v^q-\xi^{b'}v)\xi^{i_{\ell'}'}\gamma^{\ell'}+\sum\limits_{m=1,m\neq \ell}^ruv\xi^{i_{m}'}\gamma^{m}+ (u^{q}-\xi^{b}u)v\xi^{i_{\ell}}\gamma^{\ell}\\
		&\quad +(u^{q}-\xi^{b}u)(v^q-\xi^{b'}v)\xi^{i_{\ell}+i_{\ell'}'}\gamma^{\ell+\ell'}+\sum\limits_{m=1,m\neq \ell}^r (u^{q}-\xi^{b}u)v\xi^{i_{\ell}+i_{m}'}\gamma^{m+\ell}\\
            &\quad +\sum\limits_{m=1,m\neq \ell}^ruv\xi^{i_{m}}\gamma^{m}+\sum\limits_{m=1,m\neq \ell}^ru(v^{q}-\xi^{b'}v)\xi^{i_{m}+i_{\ell'}'}\gamma^{\ell'+m}\\
            &\quad+\sum\limits_{m=1,m\neq \ell}^r\sum\limits_{x=1,x\neq \ell}^ruv\xi^{i_{m}+i_{x}'}\gamma^{m+x}.
	\end{split}
\end{equation}
Since $\gamma$ is a root of $f(x)$ with degree $2r$,  $\gamma^{2r}$ is an $\mathbb{F}_{q^{k}}$-linear combination of $1, \gamma,\ldots, \gamma^{2r-1}$. Comparing the constant term of $\bar{u}\cdot \bar{v}$ and $\bar{s}\cdot \bar{t}$ gives $uv(1-c\xi^{i_{r}+i_{r}'})=st(1-c\xi^{i_{r}+i_{r}'})$. According to the property of the set $A$, we have $(1-c\xi^{i_{r}+i_{r}'})\neq 0$. It follows that $uv=st$. 

We distinguish the following two cases

Case 1:  $\ell\neq \ell'$.

Without loss of generality, we can assume that $\ell<\ell'$. Since $1,\gamma,\ldots,\gamma^{2r-1}$ are linearly independent over $\mathbb{F}_{q^k}$, comparing coefficients of $\gamma^{\ell}$ in Equation (\ref{Equ4}) gives $u^qv=s^qt$. Because of  $uv=st$, we deduce that $(\frac{u}{s})^{q}=\frac{t}{v}=\frac{u}{s}$, thereby $\frac{u}{s}\in \mathbb{F}_{q}^{*}$. It follows that $\bar{u}\mathbb{F}_{q}=\bar{s}\mathbb{F}_{q}$ and $\bar{v}\mathbb{F}_{q}=\bar{t}\mathbb{F}_{q}$, as desired.

Case 2: $\ell=\ell'$.

Since $1,\gamma,\ldots,\gamma^{2r-1}$ are linearly independent over $\mathbb{F}_{q^k}$, comparing the coefficients of $\bar{u}\cdot \bar{v}$ and $\bar{s}\cdot \bar{t}$  gives the following:
\begin{equation}
uv=st,
\end{equation}
\begin{equation}
uv^{q}\xi^{i_{\ell}'}+u^{q}v\xi^{i_{\ell}}=st^{q}\xi^{i_{\ell}'}+s^{q}t\xi^{i_{\ell}},
\end{equation}
\begin{equation}
uv^{q}\xi^{b}+u^{q}v\xi^{b'}=st^{q}\xi^{b}+s^{q}t\xi^{b'},
\end{equation}
\begin{equation}
uv^{q}\xi^{i_{m}+i_{\ell}'}+u^{q}v\xi^{i_{m}'+i_{\ell}}=st^{q}\xi^{i_{m}+i_{\ell}'}+s^{q}t\xi^{i_{m}'+i_{\ell}}.
\end{equation}
for $1\leq m\neq \ell \leq r$.

Similar to the proof of Lemma \ref{CSC-C}, it has been checked that 
$\bar{u}\mathbb{F}_{q}=\bar{s}\mathbb{F}_{q}$, $\bar{v}\mathbb{F}_{q}=\bar{t}\mathbb{F}_{q}$, as desired.

To sum up, we have shown that $\mathcal{C}$ has minimum distance $2k-2$. In particular, we have shown that the $\lfloor \frac{(r-1)(q^{k}-2)(q^{k}-1)^{r-1}(q^{n}-1)}{2}\rfloor$ elements of $\mathcal{C}$ are distinct, so $|\mathcal{C}|=\lfloor \frac{(r-1)(q^{k}-2)(q^{k}-1)^{r-1}(q^{n}-1)}{2}\rfloor$. \qed
\end{proof}

\begin{Remark}
In the case of $n=4k$, the construction in {\rm \cite{FW2021}} yields a cyclic $(n,2k-2,k)_q$-CDC of size $q^{k}\cdot \frac{q^{4k}-1}{q-1}$, while Construction $\ref{Construction 2}$ yields a cyclic $(n,2k-2,k)_q$-CDC of size $\lfloor \frac{(q^{k}-2)(q^{k}-1)(q^{4k}-1)}{2}\rfloor$, which is larger.   By Lemma $\ref{UB}$, the upper bound on the size of a $(4k,2k-2,k)_q$-CDC is $\frac{(q^{4k}-1)(q^{4k-1}-1)}{(q^{k}-1)(q^{k-1}-1)}$.  When $k$ goes to infinity, the size of our CDC in Theorem $\ref{Con2}$  is within a factor of $\frac{1}{2}+o_{k}(1)$ of the upper bound in  Lemma $\ref{UB}$, where $o_{k}(1)$ stands for a term that tends to $0$ as $k$ tends to infinity. 
\end{Remark}

\section{Conclusion}
In this paper, we present two classes of Sidon spaces and two constructions of cyclic CDCs by taking the unions of Sidon spaces. Moreover, some cyclic CDCs constructed in this paper have more codewords than those in the literature, cf. Table I,  in particular, the size of our cyclic $(4k,2k-2,k)_q$-CDC in Theorem \ref{Con2} is within a factor of $\frac{1}{2}+o_{k}(1)$ of the sphere-packing bound as $k$ goes to infinity.

\bigskip

\end{document}